\newtheorem{theorem}{Theorem}
\newtheorem{lemma}{Lemma}
\newtheorem{corollary}{Corollary}
\theoremstyle{definition}
\newtheorem{definition}{Definition}
\theoremstyle{remark}
\newcommand{\cA}{\mathcal{A}}\newcommand{\cB}{\mathcal{B}}
\newcommand{\cM}{\mathcal{M}}
\newcommand{\cT}{\mathcal{T}}
\newcommand{\cV}{\mathcal{V}}
\newcommand{\bR}{\mathbb{R}}
\newcommand{\choice}{c}
\newcommand{\wpref}{\succeq}
\newcommand{\out}{\text{out}}
\newcommand{\ans}{\cA}
\newcommand{\minans}{\overline{\cA}}
\newcommand{\anon}{\text{anon}}
\newcommand{\nc}{\newcommand}
\nc{\on}{\operatorname}
\nc{\Spec}{\on{Spec}}
\nc{\Aut}{\textit{Aut}}
\nc{\id}{\textit{id}}
\nc{\chr}{\on{char}}
\nc{\im}{\on{im}}
\nc{\Hom}{\on{Hom}}
\nc{\lcm}{\on{lcm}}
\nc{\dual}[1]{\prescript{t}{}{#1}}
\nc{\transpose}[1]{{#1}^{\intercal}}
\nc{\Sym}{\on{Sym}}
\nc{\End}{\on{End}}
\nc{\stab}{\on{stab}}
\nc{\Li}{\on{Li}}
\nc{\spn}{\on{span}}
\nc{\sgn}{\on{sgn}}
\nc{\supp}{\on{supp}}
\nc{\Unif}{\on{Unif}}
\title{On the Computational Properties of \\ Obviously Strategy-Proof Mechanisms\thanks{We thank Mohammad Akbarpour, Ben Golub, Yannai Gonczarowski, Scott Kominers, Chiara Margaria, and Matthew Rabin for helpful comments.  Golowich gratefully acknowledges the support of the Harvard College Herchel Smith Fellowship. All errors remain our own.}}
\author{Louis Golowich\thanks{UC Berkeley. Email: \url{lgolowich@berkeley.edu}}\hspace{2mm} and Shengwu Li\thanks{Harvard University. Email: \url{shengwu_li@fas.harvard.edu}}}
\begin{document}

\maketitle

\begin{abstract}
    ``Does there exist a strategy-proof mechanism for choice rule $c$?" ``Does there exist an \textit{obviously} strategy-proof mechanism for $c$?" We find that these two problems are of similar complexity. When $c$ is represented as a payoff table, both problems can be decided in polynomial time. When $c$ is represented as a Boolean circuit, both problems are co-\textsf{NP}-complete.
\end{abstract}

\section{Introduction}

Classical mechanism design focuses on direct mechanisms, but there is strong empirical evidence that the extensive form of the mechanism affects the rate of optimal play.  A growing body of laboratory experiments documents that human subjects make frequent mistakes in direct mechanisms, but not in equivalent extensive-form mechanisms \citep{kagel1987information, kagel2001behavior, mcgee2019obvious, ZL2020, schneider2020effects, bo2020iterative, bo2020pick}.

Which mechanisms are easy for human beings to understand?  Some intuitively simple extensive-form mechanisms, such as ascending auctions or dynamic serial dictatorships, are \textit{obviously strategy-proof} \citep{li_obviously_2017}.  Obvious strategy-proofness requires that for any information set reachable under the truth-telling strategy, and for any deviation starting at that information set, every possible outcome under the deviation is no better than every possible outcome from truth-telling.  This formalizes a cognitive limitation: A mechanism is obviously strategy-proof if and only if participants can detect that truth-telling is optimal without using contingent reasoning.\footnote{For laboratory evidence of errors in contingent reasoning, see \cite{charness2009origin}, \cite{esponda2014hypothetical},  \cite{esponda2016contingent}, and \cite{martinez2019failures}.}  Another interpretation of obviously strategy-proof mechanisms is that they provide incentives under weaker assumptions about the designer's commitment power.

Obvious strategy-proofness depends on the extensive form of the mechanism. For instance, ascending auctions are obviously strategy-proof, but second-price auctions are not.  This raises new obstacles for mechanism design.  By the revelation principle, the question ``does there exist a strategy-proof (SP) mechanism for choice rule $c$?'' reduces to the question of whether the corresponding direct mechanism is strategy-proof \citep{gibbard1973manipulation, myerson1979incentive}.  By contrast, the question ``does there exist an \textit{obviously} strategy-proof (OSP) mechanism for choice rule $c$?" is about the entire class of extensive-form game trees, which is combinatorially complex.

Extensive-form mechanisms can be simpler for the agents, but are they much more complicated for the designer? There is a growing literature that studies which choice rules have OSP mechanisms, in settings such as school choice\footnote{\cite{ashlagi_stable_2018, thomas2020classification}.}, object allocation\footnote{\cite{troyan_obviously_2019,bade2019matching, mandal2020obviously, pycia_theory_2019}.}, social choice with single-peaked preferences\footnote{\cite{bade_gibbard-satterthwaite_2017, arribillaga_obvious_2020}.}, and the division problem\footnote{\cite{arribillaga2019all}.}. Do these clean characterizations arise because of particular assumptions about preferences, or is there hidden structure that makes the problem tractable in general?

To address these questions, we study the computational complexity of designing SP and OSP mechanisms. As input, we have agents with private preferences (types) and a choice rule (a function from type profiles to outcomes).  The SP decision problem is ``Does there exist a SP mechanism for this choice rule?", and similarly for the OSP decision problem.  The complexity of a problem is the computation time required to correctly answer ``yes" or ``no", as a function of the input size. This depends on the input language---with a more concise language, more time is required to decide the problem for a given input size.

We start by studying the SP and OSP decision problems under table inputs.  That is, the agents' preferences and the choice rule are represented as a table, which specifies the payoff of each type of each agent, at each reported type profile.

Table inputs are a natural yardstick, because strategy-proofness asserts a system of inequalities with rank linear in the table size.  To decide if the direct mechanism is strategy-proof, it is sufficient to check all these inequalities, so the SP problem can be solved in polynomial time.  By contrast, there are exponentially many extensive-form mechanisms for each choice rule, so to decide the OSP problem by brute-force search would take exponential time.

We prove that the OSP decision problem with table inputs can be solved in polynomial time.  Moreover, we show that OSP mechanisms, when they exist, can be constructed in polynomial time.

Along the way, we discover a characterization of OSP-implementable choice rules that may be of independent interest.  Given any choice rule, we define a corresponding \textit{obvious directed acyclic graph} (O-dag).  The root of the O-dag corresponds to the set of all type profiles.  Each vertex of the O-dag corresponds to a set of type profiles with a product structure.

The directed edges of the O-dag correspond to inferences that we can make by asking `obvious' questions.  That is, for any two vertices $v$ and $v'$, there is a directed edge from $v$ to $v'$ if we can make an `obvious' binary query to one agent about his type, such that the type profile lies in $v'$ if and only if the type profile lies in $v$ and the agent's answer is ``yes".   A binary query is obvious if, given the choice rule and given that the type profile is in $v$, every possible outcome from lying is no better than every possible outcome from answering truthfully. 

We show that a choice rule has an OSP mechanism if and only if its O-dag is \textit{nicely connected}, meaning that every singleton vertex can be reached from the root.  Hence, one way to solve the OSP decision problem is to construct the O-dag and check whether it is nicely connected.  However, a brute force approach to this verification does not yield a polynomial-time algorithm, because the O-dag has a vertex set that grows exponentially with the number of types.

Instead, our algorithm detects whether the O-dag is nicely connected without explicitly constructing the entire O-dag.  We define a greedy algorithm that starts from the root and searches for a path to each singleton vertex.  Since the algorithm is greedy, it quickly either finds such a path or returns a failure.  We then exploit the structure of the O-dag to prove that failure to greedily find a path implies that no path exists.

The greedy algorithm terminates in sub-quadratic time with respect to the input table size.  For comparison, deciding the SP decision problem requires time linear in the table size.

To summarize, deciding whether there exists an OSP mechanism for a choice rule is equivalent to deciding whether the corresponding O-dag is nicely connected, which can be done in polynomial time by a greedy algorithm.

The greedy algorithm can also be used to construct OSP mechanisms in polynomial time when they exist. Specifically, the O-dag edges visited by the algorithm assemble to form the desired mechanism.

The preceding results indicate that the SP and OSP decision problems are of similar computational complexity---they are both in the complexity class \textsf{P} when the inputs are payoff tables. However, in many natural settings, choice rules and utility functions have special structure that permits a shorter description. If we consider a more concise input language, are the problems still of similar complexity?

So far, the table inputs we have considered cannot concisely describe settings with many agents, because the table size required to describe a choice rule grows exponentially in the number of agents.\footnote{However, tables can concisely describe a few agents with many types, in the sense that table size grows polynomially in the number of types.} Many natural choice rules are anonymous, in the sense that one agent's payoffs are not changed when we permute the types of the other agents. These can be described using \textit{anonymous table inputs} that grow only polynomially in the number of agents. By adapting the greedy algorithm to anonymous table inputs, we show that the OSP problem can still be decided in polynomial time.

Now we consider the SP and OSP decision problems when the utility functions and the choice rule are represented by Boolean circuits, one for each function. Each circuit consists of input nodes that are fed through a network of logic gates to the output nodes. The size of a circuit is the number of logic gates.

Boolean circuits are a more concise input language than payoff tables. When choice rules and utility functions have a `well-behaved' logical structure, then they can often be represented using a small number of logic gates, even if the resulting table of payoffs is quite large. Consequently, we should expect that (for a given input size) the time to decide both problems will increase under circuit inputs.

Our results for circuit inputs involve the complexity class co-\textsf{NP}.  Roughly, this consists of problems such that whenever the answer is ``no", there exists a proof that can be quickly checked. Formally, a decision problem is in co-\textsf{NP} if for every \textit{no}-instance there exists a polynomial-length certificate that can be verified by a polynomial-time algorithm. It is widely believed that the hardest problems in co-\textsf{NP} cannot be solved in polynomial time.\footnote{This is implied by the $\textsf{P} \neq \textsf{NP}$ conjecture.}

Boolean circuits are a very concise input language, so one may expect that the SP and OSP decision problems are hard with circuit inputs. We prove that both problems are co-\textsf{NP}-hard, that is, at least as hard as the hardest problems in co-\textsf{NP}. The proof proceeds by a reduction from Boolean satisfiability, a \textsf{NP}-hard decision problem. This result provides a lower bound for the complexity of deciding whether an SP (or OSP) mechanism exists for a given choice rule.

Next we establish upper bounds for the complexity of the SP and OSP problems. When a choice rule is not strategy-proof, this can be certified by presenting a type profile and a profitable deviation for one agent. Such a certificate can be checked in polynomial time, so the SP decision problem with circuit inputs is in co-\textsf{NP}.

Is the OSP problem with circuit inputs in co-\textsf{NP}? That is, when no OSP mechanism exists for a given choice rule, is there always a quick proof of non-existence? We prove this by exploiting the O-dag structure developed earlier. We find that there exists no OSP mechanism for a choice rule if and only if some non-singleton vertex of its O-dag has no children. Thus, one can certify the \textit{no}-instances of the OSP problem by exhibiting such a vertex and proving that it has no children, which can be done in polynomial time.

The preceding results imply that the SP and OSP problems are co-\textsf{NP}-complete under circuit inputs. That is, they are as hard as, but no harder than, the hardest problems in co-\textsf{NP}. This implies that either problem can be reduced to the other in polynomial time.

In summary, our results indicate that it is similarly complex to decide whether a choice rule has a corresponding SP mechanism or OSP mechanism, despite the apparent hurdles introduced by considering extensive forms. With table inputs, one can decide the OSP problem and construct the mechanism in polynomial time. With circuit inputs, when the desired mechanism does not exist, there is always a short proof of the claim.

Computational constraints can be just as real as the constraints imposed by information or incentives. There is growing interest in how these constraints affect the design of economic systems \citep{mu2008truthful, hartline2015non, camara2021computationally, akbarpour2022investment}. The present results suggest that making mechanisms simple for human participants does not come at a large cost in computational complexity for the designer.

\section{Model and primitives}

Consider a finite set of agents $N$ with type sets $T_1,\dots,T_n$, and an outcome set $X$.  We assume private values: each agent has a utility function $u_i: T_i \times X \rightarrow \bR$.  For each type $t_i$, let the binary relation $\wpref_{t_i}$ denote the weak preference ordering over outcomes induced by $t_i$, that is, $x \wpref_{t_i} x'$ if $u_i(t_i,x) \geq u_i(t_i,x')$.  Let $T \equiv T_1\times\cdots\times T_n$, with representative element $t$.

A choice rule is a function $\choice:T \rightarrow X$.  As usual, we use $T_{-i}$ to denote $\prod_{j \in N \setminus \{i\}} T_j$, and $t_{-i}$ to denote an element of $T_{-i}$.  We use analogous notation for products of subsets of the type sets. 

\begin{definition}\label{def:SP}
  Choice rule $\choice$ is \textbf{strategy-proof} (SP) if for all $t_i$, $t'_i$, and $t_{-i}$,
  \begin{equation}
    \choice(t_i, t_{-i}) \wpref_{t_i} \choice(t'_i, t_{-i}).
  \end{equation}
\end{definition}

\subsection{Definition of extensive-form mechanisms}\label{sec:extensiveform}

Informally, a mechanism is an extensive game form for which each terminal history is associated with some outcome.  In principle, such a mechanism could have imperfect information\footnote{That is, some agent may act without knowing other agents' past actions.} or even imperfect recall.  Furthermore, agents could in principle play mixed strategies or behavioral strategies.  However,  for our purposes it is without loss of generality to restrict attention to mechanisms in which agents take turns to make public announcements about their private information, and play pure strategies \citep{mackenzie2020revelation}.\footnote{In particular, if there exists an obviously strategy-proof mechanism for a choice rule, then there exists an obviously strategy-proof mechanism in this smaller class.}  To ease notation, we restrict attention to this class, which was identified by \cite{ashlagi_stable_2018}, and refined by \cite{pycia_theory_2019}, \cite{bade_gibbard-satterthwaite_2017}, and \cite{mackenzie2020revelation}. We additionally restrict attention to mechanisms in which every type profile results in a finite play.\footnote{When dealing with finite type sets, this requirement is without loss of generality.  In principle, obvious strategy-proofness is well-defined for the larger class of meet-semilattice trees \citep{mackenzie2020revelation}, but the restriction to finite plays makes the definitions in the sequel more transparent.}

Formally, given a choice rule $c$, a \textbf{mechanism for $c$} specifies a directed rooted tree with vertex set $V$ and edge set $E$.  Essentially, we start play from the root vertex $\overline{v}$, and at each step one agent makes a declaration about his type.  We use $\out(v)$ to denote the out-edges of vertex $v$, and $Z$ to denote the terminal vertices, \textit{i.e.} those with no out-edges.  For each non-terminal vertex $v$, some agent is called to play, whom we denote $P(v)$. \Cref{fig:tree_example} illustrates such a mechanism.

Each vertex $v \in V$ is associated with a set of types $T_i^v \subseteq T_i$ for each agent $i$, and we denote $T^v = \prod_i T_i^v$.  We interpret the out-edges as reports that could be made by $P(v)$.  Hence, for any non-terminal vertex $v$, each edge $e \in \out(v)$ corresponds to a non-empty subset of $P(v)$'s types, denoted $T_{P(v)}^e$.  We require that the family of sets $(T_{P(v)}^e)_{e \in \out(v)}$ is a partition of $T_{P(v)}^v$.

We interpret $T_i^v$ as the types of $i$ consistent with the reports so far, which requires some record-keeping.  Hence we require that at the root vertex all types are possible, $T^{\overline{v}} = T$.  We also require that if $v'$ is child of $v$ reached by edge $e$, then $T_{P(v)}^{v'} = T_{P(v)}^e$ and for all $i \neq P(v)$, $T_i^{v'} = T_i^v$.

We require that every type profile results in a finite play; that is, for all $t \in T$, there exists a terminal vertex $v \in Z$ such that $t \in T^v$.  This implies that the terminal vertices $Z$ induce a partition of $T$, which is $(T^v)_{v \in Z}$.

Finally, we require that the reports yield enough information to enact the choice rule.  Formally, this means that the choice rule $c$ is measurable with respect to the partition induced by the terminal vertices, \textit{i.e.} for all $v \in Z$ and all $t, t' \in T^v$, $c(t) = c(t')$.

\begin{figure}
    \centering
    \includegraphics[width=.6\textwidth]{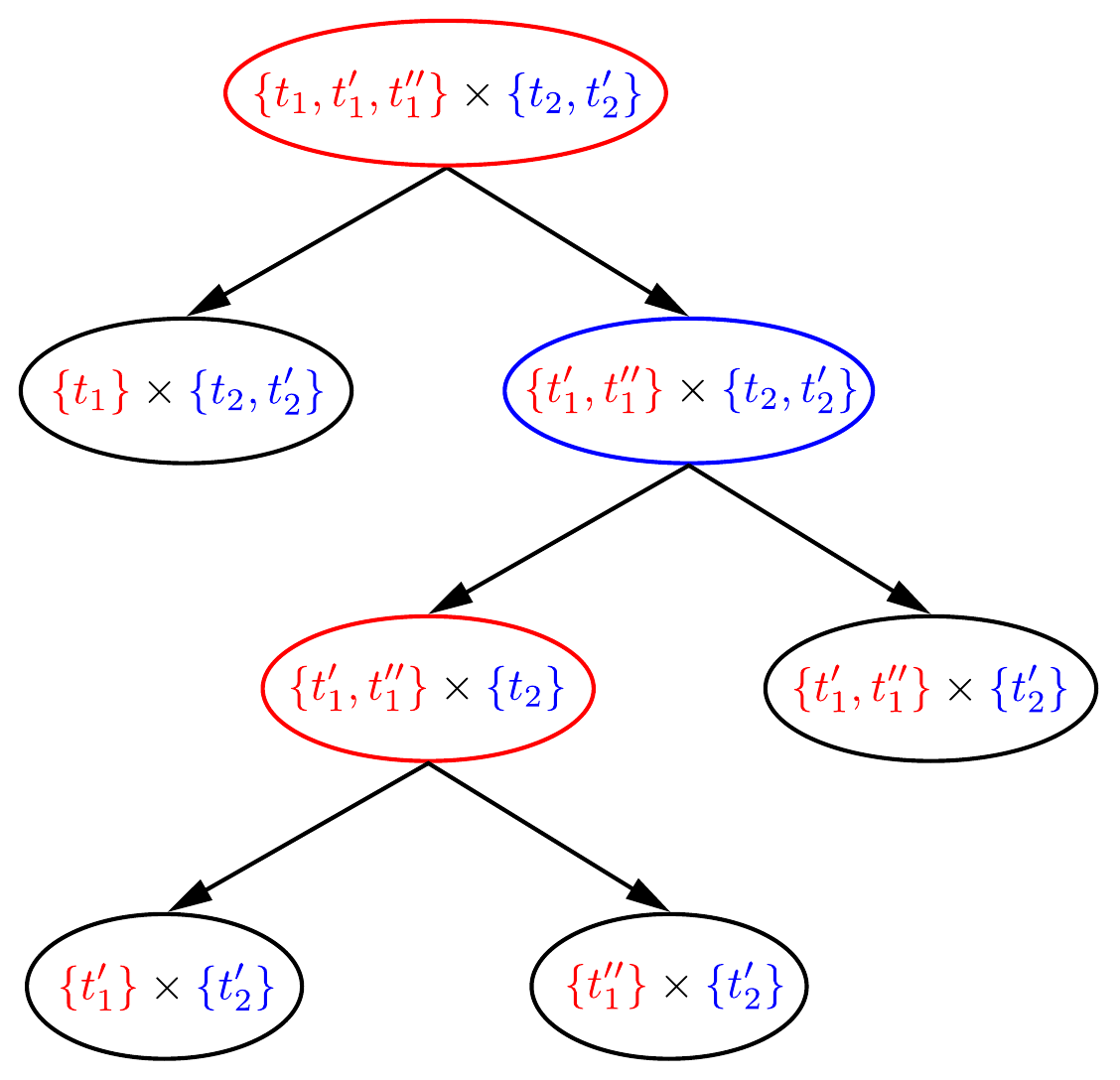}
    \caption{Example of a mechanism. Agent $1$ plays at the red vertices, agent $2$ plays at the blue vertex, and the other vertices are terminal.}
    \label{fig:tree_example}
\end{figure}

\subsection{Obvious strategy-proofness}
\label{sec:OSP}

Given some non-terminal vertex $v$, we say that $t_{P(v)}$ and $t'_{P(v)}$ \textbf{diverge at $v$} if $t_{P(v)}, t'_{P(v)} \in T_{P(v)}^v$ and they make distinct reports, that is, if there is no $e \in \out(v)$ such that $t_{P(v)}, t'_{P(v)} \in T_{P(v)}^e$.

\begin{definition}\label{def:OSP}
  A mechanism for $c$ is \textbf{obviously strategy-proof} (OSP) if for any agent $i$, any vertex $v$ with $P(v) = i$, any $t_i$ and $t'_i$ that diverge at $v$, and any $t_{-i}, t'_{-i} \in T_{-i}^v$, we have that
  \begin{equation}
    \label{eq:oquery}
    \choice(t_i, t_{-i}) \wpref_{t_i} \choice(t'_i, t'_{-i}).
  \end{equation}
  We say that $c$ is \textbf{OSP-implementable} if there exists an OSP mechanism for $c$.
\end{definition}

We will also sometimes use the following intuitive terminology. By definition, each vertex $v$ has the associated \textbf{query} $(P(v),T^v,\{T_{P(v)}^e\}_{e\in\out(v)})$ that asks player $P(v)$ which partition element $T_{P(v)}^e$ contains their type, given that the overall type profile lies in $T^v$. We say this query is \textbf{obvious} if~(\ref{eq:oquery}) holds for any $t_i$ and $t_i'$ that diverge at $v$, and any $t_{-i}, t'_{-i} \in T_{-i}^v$. Then a mechanism is OSP if and only if the query at every vertex is obvious.

We now contrast \Cref{def:SP} and \Cref{def:OSP}.  \Cref{def:SP} is a state-by-state comparison; for any $t_{-i}$, type $t_i$ must prefer truthful reporting to any deviation. By contrast, \Cref{def:OSP} compares every possible outcome following the truthful report to every possible outcome following a deviation, but restricts this comparison to vertices at which play diverges.  This property depends on the extensive form of the mechanism.  

\Cref{def:OSP} is a stronger requirement than \Cref{def:SP}. For any choice rule $c$, if $c$ is OSP-implementable, then $c$ is SP.

The original definition of OSP restricts attention to mechanisms with a uniform bound on the depth of the game tree \citep{li_obviously_2017}.  Here we have only made the weaker requirement that every type profile results in a finite play.  \Cref{def:OSP} is equivalent to the original definition on game trees with a uniform bound.\footnote{Without a uniform bound, the original definition runs into problems with deviations that do not mimic a coherent type and prolong the game forever.  One other subtlety is that \Cref{def:OSP} considers only deviations that consistently mimic some type $t'_i$, whereas the definition of \cite{li_obviously_2017} permits deviations that mimic $t'_i$ at some vertices and mimic a distinct type $t''_i$ at other vertices.  But the existence of a `profitable' deviation of that kind implies the existence of a `profitable' deviation of the simpler kind considered by \Cref{def:OSP}, so the definitions are equivalent.} Were we to make this additional restriction, we would need to strengthen the connectedness requirement (\Cref{def:nicely_connected}), but it would not otherwise alter any of the results.

\section{A characterization of OSP-implementable choice rules}\label{sec:char_OSP}

For any given choice rule $c$, the number of mechanisms for $c$ is exponential in $|T_i|$ for each agent $i$.  In this section, we develop a machinery that associates each choice rule with a single directed acyclic graph.  We then prove that there exists an OSP mechanism for $c$ if and only if the graph is connected in a particular way.

\begin{definition}
  \label{def:odag}
  The \textbf{obvious directed acyclic graph (O-dag)} of choice rule $\choice$ is the directed acyclic graph with vertex set
  \begin{equation}
    \left\{\prod_{i} T_i' \ \middle| \ \forall i: \emptyset\subset T_i'\subseteq T_i \right\}
  \end{equation}
  and with edge set containing all edges of the form
  \begin{equation}
    T' \Rightarrow (A_i, T'_{-i})
  \end{equation}
  where $\emptyset\subset A_i\subset T_i'$, such that for all $a_i \in A_i$, all $a'_i \in T_i'\setminus A_i$, and all $t_{-i}, t'_{-i} \in T'_{-i}$
  \begin{equation}
    \choice(a_i, t_{-i}) \wpref_{a_i} \choice(a'_i, t'_{-i}) \text{ and }
    \choice(a'_i, t'_{-i}) \wpref_{a'_i} \choice(a_i, t_{-i}).
  \end{equation}
\end{definition}

The similarity between the inequalities in \Cref{def:OSP} and \Cref{def:odag} can be explained as follows: The O-dag contains exactly those edges $T' \Rightarrow (A_i,T_{-i}')$ for which the binary query $(i,T',\{A_i,T_i'\setminus A_i\})$ asking whether agent $i$'s type lies in $A_i$ or $T_i'\setminus A_i$ is obvious (see \Cref{sec:OSP}).

\begin{definition}
  \label{def:nicely_connected}
  The O-dag of $\choice$ is \textbf{nicely connected} if for every $t \in T$, there exists a path from the root $T$ to the vertex $\{t\}$.
\end{definition}

The main result of this section, stated below, shows that OSP-implementable choice rules are characterized by this connectedness property.  This enables us to reduce a question about the class of all mechanisms for $\choice$ to a question about the O-dag of $\choice$.

\begin{theorem}\label{thm:ODAG_OSP}
  For any choice rule $\choice$, there exists an OSP mechanism for $\choice$ if and only if the O-dag of $\choice$ is nicely connected.
\end{theorem}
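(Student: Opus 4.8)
\medskip

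The plan is to prove the two directions separately, with the "only if" (easy) direction extracting an O-dag path from a given OSP mechanism, and the "if" (hard) direction building an OSP mechanism out of a nicely connected O-dag.

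For the \emph{only if} direction, suppose we are given an OSP mechanism for $\choice$. Fix $t \in T$ and let $\overline{v} = v_0, v_1, \dots, v_m$ be the path of vertices visited in the mechanism under the type profile $t$, where $v_m \in Z$ is the terminal vertex with $t \in T^{v_m}$. I would show that the corresponding sequence of type-profile sets $T^{v_0} = T, T^{v_1}, \dots, T^{v_m}$ (after deleting consecutive repeats, which can occur only when some other agent is called but that does not happen here, so in fact each step strictly shrinks one coordinate) forms a path in the O-dag ending at a vertex $T^{v_m}$ with $t \in T^{v_m}$ and $\choice$ constant on $T^{v_m}$. The key point is that each edge $T^{v_k} \Rightarrow T^{v_{k+1}}$ is a legal O-dag edge: at $v_k$ agent $i = P(v_k)$ is called, $T_i^{v_{k+1}} = T_i^{e} \subsetneq T_i^{v_k}$ for the edge $e$ followed by $t$, and the OSP condition at $v_k$ — applied to any $a_i \in T_i^{v_{k+1}}$ and any $a_i' \in T_i^{v_k} \setminus T_i^{v_{k+1}}$, which diverge at $v_k$, together with any $t_{-i}, t_{-i}' \in T_{-i}^{v_k}$ — gives exactly the pair of inequalities in \Cref{def:odag}, in both directions by symmetry of $a_i, a_i'$. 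This yields a path from $T$ to a vertex on which $\choice$ is constant and which contains $t$; I then need the small additional step that from such a vertex one can reach $\{t\}$ itself. This is not automatic from \Cref{def:odag} alone, so I would instead be cleverer about which mechanism-path to use, or argue that every finer split down to $\{t\}$ is an O-dag edge \emph{because} $\choice$ is already constant (so the inequalities hold trivially) — indeed if $\choice$ is constant on $T^{v_m}$ then $\choice(a_i,t_{-i}) = \choice(a_i',t_{-i}')$ for all choices, so every nonempty proper split $A_i \subsetneq T_i^{v_m}$ gives a valid edge, and we can peel off coordinates one at a time until we reach $\{t\}$. Hence the O-dag is nicely connected.

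For the \emph{if} direction, suppose the O-dag of $\choice$ is nicely connected; I must construct an OSP mechanism. The natural idea is to build the mechanism greedily from the root, reusing the structure of \Cref{lem:min_partition}. Starting at a vertex $v$ with associated set $T^v$ (initially $T^{\overline v} = T$), if $T^v$ is a singleton we make $v$ terminal. Otherwise I want to find some agent $i$ with $\ans_i(T^v)$ nonempty, call that agent, and branch according to the partition $\minans_i(T^v)$ of $T_i^v$ given by \Cref{lem:min_partition}; the children are the vertices $(A_i, T_{-i}^v)$ for $A_i \in \minans_i(T^v)$, each reached by a valid O-dag edge (Clause~1 of \Cref{lem:min_partition}), and the OSP inequalities required by \Cref{def:OSP} at $v$ are precisely the O-dag edge conditions, which hold because distinct blocks $A_i, A_i'$ of $\minans_i(T^v)$ are both in $\ans_i(T^v)$ and $A_i' \subseteq T_i^v \setminus A_i$ up to the closure-under-union fact (\Cref{lem:edgestructure}). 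The crux is proving this greedy process does not get stuck — i.e., that whenever $T^v$ is not a singleton, \emph{some} agent $i$ has $\ans_i(T^v) \neq \emptyset$, and moreover that after branching, every child's O-dag remains nicely connected so the recursion can continue down to all singletons. This is exactly where \Cref{lem:indpath} does the work: niceness at $T$ gives, for each $t \in T^v$, a path from $T$ to $\{t\}$; intersecting with $T^v$ via \Cref{lem:indpath} (taking $B = T^v$, using $B_i \cap \{t_i\} = \{t_i\} \neq \emptyset$) yields a path from $T^v = T^v \cap T$ to $\{t\} = T^v \cap \{t\}$ inside the O-dag, so $T^v$ is itself the root of a nicely connected sub-O-dag; in particular, since $T^v$ is not a singleton, the first edge of such a path certifies some $\ans_i(T^v) \neq \emptyset$. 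Branching along $\minans_i(T^v)$ rather than along an arbitrary edge is what keeps the children honest: for each child $(A_i, T_{-i}^v)$ and each target $t$ with $t_i \in A_i$, applying \Cref{lem:indpath} again with $B = (A_i, T_{-i}^v)$ to a path from $T^v$ to $\{t\}$ gives a path from $(A_i, T_{-i}^v)$ to $\{t\}$ — the minimality (Clause~2 of \Cref{lem:min_partition}) guarantees the path's first step out of $T^v$ has its $i$-block contained in $A_i$, so the intersection behaves correctly and the child is nicely connected. Finiteness/termination follows because each step strictly decreases $\sum_i |T_i^v|$.

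\medskip

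The main obstacle I expect is the second direction's bookkeeping: making sure the greedily-built tree satisfies \emph{all} the formal requirements of a mechanism (the partition-of-out-edges condition, the record-keeping conditions $T_{P(v)}^{v'} = T_{P(v)}^e$ and $T_j^{v'} = T_j^v$ for $j \neq P(v)$, measurability of $\choice$ at the leaves — which holds since every leaf is a singleton) and, more substantively, the inductive claim that niceness is inherited by every child when we branch along $\minans_i(T^v)$. Getting that inheritance right requires combining \Cref{lem:indpath} with Clause~2 of \Cref{lem:min_partition} carefully, and also handling the subtle point flagged in the paper's footnote about deviations that do not mimic a coherent type — but since \Cref{def:OSP} as stated only quantifies over consistent deviations $t_i'$, the O-dag edge conditions at each branching vertex are exactly what \Cref{def:OSP} demands, so no extra argument is needed there. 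A secondary subtlety, already noted above, is the end of the "only if" direction: from a mechanism leaf we land on a $\choice$-constant set rather than a singleton, and we must observe that $\choice$-constancy makes every further split a valid O-dag edge so we can descend to $\{t\}$.
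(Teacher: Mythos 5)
Your overall architecture matches the paper's: the ``only if'' direction extracts O-dag edges from the divergence condition along the mechanism's path to a terminal vertex (and you correctly spot, and correctly repair, the final step --- the terminal set is only $\choice$-constant, so you must peel down to $\{t\}$ one coordinate at a time using the fact that constancy makes every split a valid edge). The ``if'' direction via branching on $\minans_i$ and propagating nice connectedness to children through \Cref{lem:indpath} (with $B$ equal to the child) is also sound, and is in fact closer to the paper's correctness argument for its Algorithm~1 than to its proof of \Cref{thm:ODAG_OSP} itself. One small simplification available to you: once you take $B=(A_i,T_{-i}^v)$ in \Cref{lem:indpath}, the child's nice connectedness follows immediately; you do not need Clause~2 of \Cref{lem:min_partition} to control the first step of the path.

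The one genuine gap is termination. \Cref{thm:ODAG_OSP} is stated for arbitrary choice rules, and the model section does not assume the $T_i$ are finite (finiteness is only imposed from \Cref{corr:finite_rule_out} onward). Your argument that the recursion halts ``because each step strictly decreases $\sum_i|T_i^v|$'' is only valid for finite type sets. Worse, your rule ``choose \emph{some} agent $i$ with $\ans_i(T^v)\neq\emptyset$'' can starve an agent when type sets are infinite: one could keep splitting agent~1's infinite type set forever, never asking agent~2, so some type profile never reaches a singleton and the object built is not a mechanism (which requires every type profile to yield a finite play). The paper avoids this by cycling through agents in a fixed round-robin order and proving, by induction along the \emph{given} finite O-dag path $T^0,\dots,T^K$ to $\{t\}$, that the procedure's state after $nk$ steps is contained in $T^k$ (using \Cref{lem:indedge} plus minimality of $\minans$), hence terminates for $t$ within $nK$ steps. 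To close your proof in full generality you would need either that scheduling discipline or an equivalent fairness argument; restricted to finite type sets, your construction is correct as written.
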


In subsequent sections, we will use the O-dag characterization to answer algorithmic questions about OSP mechanisms.  However, the characterization may be of independent interest.  For instance, \Cref{lem:indpath} below and \Cref{thm:ODAG_OSP} together yield a tractable method to prove the non-existence of OSP mechanisms for particular choice rules, stated in the following corollary.

\begin{corollary}\label{corr:rule_out}
  For any choice rule $\choice$, if the O-dag of $\choice$ contains a non-singleton vertex $T'$, and $t \in T'$ such that there is no path from $T'$ to $\{t\}$, then $\choice$ is not OSP-implementable.
\end{corollary}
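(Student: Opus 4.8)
The plan is to argue by contradiction, leaning entirely on \Cref{thm:ODAG_OSP} and \Cref{lem:indpath}, so that no new combinatorial work is needed. Suppose the hypotheses hold but $\choice$ were OSP-implementable. Then by \Cref{thm:ODAG_OSP} the O-dag of $\choice$ is nicely connected, so by \Cref{def:nicely_connected} there is a path in the O-dag from the root $T$ to the singleton vertex $\{t\}$ (using that $t \in T' \subseteq T$, so $t \in T$).

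Next I would invoke \Cref{lem:indpath} with the path from $T$ to $\{t\}$ playing the roles of the path from $T'$ to $T''$ in that lemma, and with $B := T'$. The hypothesis of \Cref{lem:indpath} requires $B_i \cap \{t\}_i \neq \emptyset$ for every agent $i$; since $\{t\}_i = \{t_i\}$ and $t \in T'$ gives $t_i \in T'_i = B_i$, this intersection equals $\{t_i\}$ and is nonempty, so the hypothesis is met. \Cref{lem:indpath} then yields a path in the O-dag from $B \cap T = T'$ to $B \cap \{t\} = \{t\}$, where I have used $T' \subseteq T$ and $t \in T'$ to simplify the intersections. This directly contradicts the assumption that there is no path from $T'$ to $\{t\}$, completing the argument.

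I do not expect any step here to be a real obstacle: the content is already packaged in \Cref{lem:indpath} and \Cref{thm:ODAG_OSP}, and the only thing to be careful about is bookkeeping the intersections $T' \cap T$ and $T' \cap \{t\}$ and verifying the coordinatewise nonemptiness condition of \Cref{lem:indpath}, both of which are immediate from $t \in T' \subseteq T$. It may be worth remarking that the non-singleton assumption on $T'$ is only needed to ensure $T' \neq \{t\}$, so that ``no path from $T'$ to $\{t\}$'' is a substantive (rather than vacuously false, via a length-zero path) hypothesis; this point can be stated in a single sentence.
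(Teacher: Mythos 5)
Your argument is correct and is precisely the route the paper intends: the corollary is stated as a direct consequence of \Cref{lem:indpath} and \Cref{thm:ODAG_OSP}, and your contrapositive---take the path from the root $T$ to $\{t\}$ guaranteed by nice connectedness, intersect it with $B = T'$ via \Cref{lem:indpath}, and note $T' \cap T = T'$ and $T' \cap \{t\} = \{t\}$---is exactly that. Your closing remark that the non-singleton hypothesis on $T'$ only serves to rule out the degenerate length-zero path is also accurate.
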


When type spaces are finite, all paths in the O-dag are finite. This property gives the stronger condition below.

\begin{corollary}\label{corr:finite_rule_out}
  If all $T_i$ are finite, then a choice rule $\choice$ is OSP-implementable if and only if every non-singleton vertex in its O-dag has a child.
\end{corollary}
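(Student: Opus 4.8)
The plan is to reduce to \Cref{thm:ODAG_OSP}, which says $\choice$ is OSP-implementable if and only if its O-dag is nicely connected, and then show that under finiteness nice connectedness is equivalent to the stated condition that every non-singleton vertex has a child.

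For the ``only if'' direction I would argue by contraposition using \Cref{corr:rule_out}. Suppose the O-dag has a non-singleton vertex $T'$ with no child. Then $T'$ has no out-edge, so there is no (nontrivial) path out of $T'$; pick any $t \in T'$, and note $T' \neq \{t\}$ because $T'$ is non-singleton, so there is no path from $T'$ to $\{t\}$. By \Cref{corr:rule_out}, $\choice$ is not OSP-implementable. Hence if $\choice$ is OSP-implementable, every non-singleton vertex of the O-dag has a child.

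For the ``if'' direction, suppose every non-singleton vertex of the O-dag has a child. Fix $t \in T$; I want a path from the root $T$ to $\{t\}$. The key observation is that if a vertex $v$ is non-singleton and $t \in v$, then $v$ has a child that still contains $t$: $v$ has some child $(A_i, v_{-i})$ with $\emptyset \subset A_i \subset v_i$, i.e.\ $A_i \in \ans_i(v)$, and since $\ans_i(v)$ is closed under complement relative to $v_i$, the set $(v_i \setminus A_i, v_{-i})$ is also a child of $v$; whichever of these two children has its $i$-th coordinate containing $t_i$ contains $t$. Starting from $v^0 = T \ni t$ and iterating, I obtain vertices $v^0, v^1, \dots$ each containing $t$, with $v^{k+1}$ a child of $v^k$ as long as $v^k$ is non-singleton. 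Each edge strictly shrinks one coordinate, so $\prod_j |v^k_j|$ is strictly decreasing in $k$; by finiteness of the $T_i$ this is bounded below by $1$, so the process reaches a singleton vertex after finitely many steps. That singleton contains $t$, hence equals $\{t\}$, giving a path from $T$ to $\{t\}$. Since $t$ was arbitrary, the O-dag is nicely connected, and by \Cref{thm:ODAG_OSP}, $\choice$ is OSP-implementable.

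The argument is mostly a routine assembly of the earlier machinery; the one step I would be careful about is the ``if'' direction, where closure of $\ans_i$ under relative complement is exactly what lets the greedy descent stay inside a vertex containing $t$, and the strict-shrinking observation is what lets finiteness guarantee termination at $\{t\}$ rather than merely at some singleton.
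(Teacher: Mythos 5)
Your proposal is correct and follows essentially the same route as the paper's proof: the ``only if'' direction via \Cref{corr:rule_out}, and the ``if'' direction by greedily descending through children containing $t$, using closure of $\ans_i$ under relative complement to guarantee such a child exists at every non-singleton vertex. The only cosmetic difference is that you make the termination argument explicit via the strictly decreasing product $\prod_j |v^k_j|$, whereas the paper leaves finiteness of the path implicit and instead argues by contradiction that the terminal vertex must be $\{t\}$.
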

\begin{proof}
  If the O-dag of $\choice$ contains a non-singleton vertex with no child, then Corollary~\ref{corr:rule_out} implies that $\choice$ is not OSP-implementable.

  For the converse, let every non-singleton vertex of the O-dag have a child. Consider any $t\in T$. Inductively construct a path $T^1,T^2,\dots$ with $T^1=T$ such that $T^{k+1}$ is any child of $T^k$ that contains $t$. The path ends upon reaching some $T^K$ with no such child. If $T^K$ were non-singleton, then it would have some children $(A_i,T_{-i}^K)$ and $(T_i^K\setminus A_i,T_{-i}^K)$, one of which contains $t$, a contradiction. Thus $T^K=\{t\}$. Therefore the O-dag is nicely connected, so $\choice$ is OSP-implementable by \Cref{thm:ODAG_OSP}.
\end{proof}

We now turn to proving \Cref{thm:ODAG_OSP}. To begin, we establish some properties of the O-dag that aid in the characterization. The following lemma shows that each edge in the O-dag implies the existence of all edges obtained by passing to restricted type sets. That is, an obvious binary query remains obvious upon restricting the set of possible types.

\begin{lemma}
  \label{lem:indedge}
  If the O-dag of $\choice$ contains an edge $T' \Rightarrow (A_i, T'_{-i})$, then for any vertex $B \subseteq T'$ such that $\emptyset\subset (A_i\cap B_i)\subset B_i$, the O-dag of $\choice$ also contains the edge $B \Rightarrow (A_i \cap B_i, B_{-i})$.
\end{lemma}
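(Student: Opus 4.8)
The plan is to prove this by directly checking the two defining conditions of the edge $B \Rightarrow (A_i \cap B_i, B_{-i})$ given in \Cref{def:odag}: first, that $(A_i \cap B_i, B_{-i})$ is obtained from $B$ by splitting off a non-trivial, proper subset of agent $i$'s types; and second, that the obviousness inequalities hold for every relevant triple of types. The whole argument is a restriction argument: the triples that must be checked for the new edge form a subset of the triples already controlled by the given edge $T' \Rightarrow (A_i, T'_{-i})$.

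First I would dispatch the vertex/splitting condition. Since $B$ is a vertex we have $\emptyset \subset B_j \subseteq T_j$ for all $j$, so $(A_i \cap B_i, B_{-i})$ is a legitimate vertex provided $A_i \cap B_i$ is non-empty, and it is obtained from $B$ by a valid split provided $\emptyset \subset A_i \cap B_i \subset B_i$ — which is exactly the hypothesis of the lemma. So nothing needs to be done here beyond quoting the assumption.

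Next I would establish the two containments that drive everything: because $B \subseteq T'$, we have $B_i \subseteq T'_i$ and $B_{-i} \subseteq T'_{-i}$; and, crucially, $B_i \setminus (A_i \cap B_i) \subseteq T'_i \setminus A_i$. The last inclusion is the one point to spell out: if $a'_i \in B_i$ and $a'_i \notin A_i \cap B_i$, then since $a'_i \in B_i$ it must be that $a'_i \notin A_i$, and since $B_i \subseteq T'_i$ we get $a'_i \in T'_i \setminus A_i$. With these in hand, take any $a_i \in A_i \cap B_i$, any $a'_i \in B_i \setminus (A_i \cap B_i)$, and any $t_{-i}, t'_{-i} \in B_{-i}$; then $a_i \in A_i$, $a'_i \in T'_i \setminus A_i$, and $t_{-i}, t'_{-i} \in T'_{-i}$, so the pair of inequalities $c(a_i, t_{-i}) \wpref_{a_i} c(a'_i, t'_{-i})$ and $c(a'_i, t'_{-i}) \wpref_{a'_i} c(a_i, t_{-i})$ holds by the defining property of the assumed edge $T' \Rightarrow (A_i, T'_{-i})$. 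Since this covers every triple required by \Cref{def:odag} for the edge $B \Rightarrow (A_i \cap B_i, B_{-i})$, that edge is present in the O-dag.

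I do not expect a genuine obstacle here; the statement is essentially a monotonicity (``restriction'') property of the edge relation, and the only place warranting care is the set-difference inclusion $B_i \setminus (A_i \cap B_i) \subseteq T'_i \setminus A_i$, which must be argued rather than asserted, since naively one only knows $B_i \setminus (A_i \cap B_i) \subseteq B_i \subseteq T'_i$.
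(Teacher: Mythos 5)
Your proposal is correct and follows essentially the same route as the paper's proof: verify the non-trivial-split condition from the hypothesis, establish the three inclusions $A_i \cap B_i \subseteq A_i$, $B_i \setminus (A_i \cap B_i) \subseteq T'_i \setminus A_i$, and $B_{-i} \subseteq T'_{-i}$, and then invoke the defining inequalities of the edge $T' \Rightarrow (A_i, T'_{-i})$. The only difference is that you spell out the set-difference inclusion in more detail than the paper, which simply asserts it; this is a reasonable point of care but not a different argument.
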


\begin{proof}
  Take any $a_i \in A_i \cap B_i$, any $a'_i \in B_i \setminus (A_i \cap B_i)$, and any $t_{-i}, t'_{-i} \in B_{-i}$.  Observe that $A_i \cap B_i \subseteq A_i$, $B_i \setminus (A_i \cap B_i) \subseteq T'_i \setminus A_i$, and $B_{-i} \subseteq T'_{-i}$.  Since the O-dag of $\choice$ contains $T' \Rightarrow (A_i, T'_{-i})$, we have that
  \begin{equation}
    \choice(a_i, t_{-i}) \wpref_{a_i} \choice(a'_i, t'_{-i}) \text{ and }
    \choice(a'_i, t'_{-i}) \wpref_{a'_i} \choice(a_i, t_{-i}).
  \end{equation}
  Hence, the O-dag of $\choice$ contains the edge $B \Rightarrow (A_i \cap B_i, B_{-i})$.
\end{proof}

To ease notation, note that the Cartesian product is distributive over intersections, so for any two sets $A = \prod_i A_i$ and $B = \prod_i B_i$ we have $A \cap B = \prod_i \left(A_i \cap B_i \right)$.

Below, we iteratively apply \Cref{lem:indedge} to show that each path in the O-dag implies the existence of all paths obtained by passing to restricted type sets.

\begin{lemma}
  \label{lem:indpath}
  If the O-dag of $\choice$ contains a path from vertex $T'$ to vertex $T''$, then for any vertex $B$ such that $B_i \cap T''_i \neq \emptyset$ for all $i$, the O-dag contains a path from $B \cap T'$ to $B \cap T''$.
\end{lemma}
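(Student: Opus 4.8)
The plan is to push the given path down coordinatewise by intersecting every vertex along it with $B$, using \Cref{lem:indedge} to turn each original edge into either an edge or a trivial non-move between the intersected vertices.

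First I would write the path as $T' = V^0 \Rightarrow V^1 \Rightarrow \cdots \Rightarrow V^k = T''$. Every edge of the O-dag strictly shrinks its source — the edge $V^j \Rightarrow (A_i, V^j_{-i})$ has $A_i \subsetneq V^j_i$ and leaves the other coordinates fixed — so $T'' = V^k \subseteq V^j$ for every $j$. Combined with the hypothesis that $B_i \cap T''_i \neq \emptyset$ for all $i$, this gives $B_i \cap V^j_i \supseteq B_i \cap T''_i \neq \emptyset$ for all $i$ and $j$, so each $B \cap V^j$ is a genuine vertex of the O-dag (a product of nonempty subsets). These intersected vertices will serve as the vertices of the path I build from $B \cap T' = B\cap V^0$ to $B\cap T'' = B\cap V^k$.

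Next, for each $j < k$, let $i$ be the agent and $A_i$ the set with $V^{j+1} = (A_i, V^j_{-i})$ and $\emptyset\subsetneq A_i \subsetneq V^j_i$. Since $A_i \subseteq V^j_i$, the vertices $B\cap V^j$ and $B\cap V^{j+1}$ agree in every coordinate except $i$, where they are $B_i\cap V^j_i$ and $B_i\cap A_i \subseteq B_i\cap V^j_i$ respectively; moreover $T''_i \subseteq A_i$ gives $B_i\cap A_i \supseteq B_i\cap T''_i \neq\emptyset$. There are two cases. If $B_i\cap A_i = B_i\cap V^j_i$, then $B\cap V^j = B\cap V^{j+1}$ and there is nothing to connect. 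Otherwise $\emptyset\subsetneq B_i\cap A_i \subsetneq B_i\cap V^j_i$, and since $A_i \cap B_i = A_i \cap B_i \cap V^j_i$, applying \Cref{lem:indedge} to the edge $V^j \Rightarrow (A_i, V^j_{-i})$ with the sub-vertex $B\cap V^j \subseteq V^j$ yields the edge $B\cap V^j \Rightarrow (A_i\cap B_i, (B\cap V^j)_{-i})$, whose target is exactly $B\cap V^{j+1}$. Concatenating these steps for $j = 0,\dots,k-1$ and deleting any repeated consecutive vertices produces a path from $B\cap T'$ to $B\cap T''$; in the degenerate case $B\cap T' = B\cap T''$ this is the trivial one-vertex path, which suffices.

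I expect essentially no serious obstacle here: the content is entirely in \Cref{lem:indedge}, and the only points requiring care are (i) checking that every intermediate intersected vertex is nonempty, which follows from the monotonicity of the original path together with the hypothesis on $B$, and (ii) the bookkeeping that an edge which ``collapses'' under intersection with $B$ simply drops out of the path rather than breaking it.
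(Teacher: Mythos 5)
Your proof is correct and follows essentially the same route as the paper's: intersect each vertex along the path with $B$, use \Cref{lem:indedge} to convert each original edge into an edge between the intersected vertices unless the intersection collapses it to equality, and use the monotonicity of the path together with $B_i \cap T''_i \neq \emptyset$ to guarantee all intermediate intersections are nonempty. The only difference is cosmetic: you make the nonemptiness of every intersected vertex and the handling of collapsed edges slightly more explicit than the paper does.
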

\begin{proof}
  Let $T^1, T^2, \ldots, T^K$ be a path with $T^1 = T'$ and $T^K = T''$.  By construction, for all $k < K$ we have an edge $T^k \Rightarrow T^{k+1}$ and for some agent $i$, $T^{k+1} = (T_i^{k+1}, T_{-i}^k)$ and $\emptyset \subset T_i^{k+1} \subset T_i^k$.
  
  Suppose $B_i \cap T_i^k \supset B_i \cap T_i^{k+1}$. Observe that $B_i \cap T_i^{k+1} \supseteq B_i \cap T''_i \neq \emptyset$.  Then by \Cref{lem:indedge}, the O-dag contains the edge $B \cap T^k \Rightarrow B \cap T^{k+1}$.
  
  Suppose $B_i \cap T_i^k = B_i \cap T_i^{k+1}$.  Then, by $T_{-i}^k = T_{-i}^{k+1}$, we have that $B \cap T^k = B \cap T^{k+1}$.
  
  Hence, for all $k < K$, either there exists an edge $B \cap T^k \Rightarrow B \cap T^{k+1}$ or $B \cap T^k = B \cap T^{k+1}$, so there exists a path from $B \cap T'$ to $B \cap T''$.
\end{proof}

We now introduce notation for the `answers' that are feasible from a vertex $T'$ of the O-dag, which are precisely the possible answers to obvious queries at $T'$.  For any vertex $T'$ in the O-dag of $c$, let $\ans_i(T')$ denote the family of sets
\begin{equation}
  \left\{A_i \ \middle| \ \text{the O-dag contains $T' \Rightarrow (A_i,T'_{-i}))$} \right\}.
\end{equation}

The symmetry of \Cref{def:odag} implies that $\ans_i(T')$ is closed under complements (relative to $T'_i$).  We now prove that it is closed under intersections and unions.

\begin{lemma}
  \label{lem:edgestructure}
  Let $T'$ be any vertex of the O-dag of $\choice$. For any agent $i$, consider some indexed sub-family $\{ A_i^k \mid k \in K \} \subseteq \ans_i(T')$ with at least one member.
  \begin{enumerate}
  \item If $\emptyset \subset \bigcap_k A_i^k$, then $\bigcap_k A_i^k \in \ans_i(T')$.
  \item If $\bigcup_k A_i^k \subset T'_i$, then $\bigcup_k A_i^k \in \ans_i(T')$.
  \end{enumerate}
\end{lemma}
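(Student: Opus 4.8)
The plan is to prove both parts by unwinding Definition~\ref{def:odag} and checking the two defining inequalities directly for the candidate answer set, using the hypothesis that each $A_i^k$ already satisfies those inequalities against its own complement $T_i' \setminus A_i^k$. Throughout, write $T_{-i}'$ for the fixed ``other coordinates'' part of the vertex, so that an edge $T' \Rightarrow (A_i, T_{-i}')$ exists precisely when, for every $a_i \in A_i$, every $a_i' \in T_i' \setminus A_i$, and every $t_{-i}, t_{-i}' \in T_{-i}'$, we have $\choice(a_i, t_{-i}) \wpref_{a_i} \choice(a_i', t_{-i}')$ and $\choice(a_i', t_{-i}') \wpref_{a_i'} \choice(a_i, t_{-i})$.

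For part~1, set $A_i := \bigcap_k A_i^k$ and assume $\emptyset \subset A_i$; note $A_i \subset T_i'$ automatically since each $A_i^k$ is a proper subset. Take $a_i \in A_i$, $a_i' \in T_i' \setminus A_i$, and $t_{-i}, t_{-i}' \in T_{-i}'$. Since $a_i' \notin A_i = \bigcap_k A_i^k$, there is some index $\ell \in K$ with $a_i' \notin A_i^\ell$, i.e. $a_i' \in T_i' \setminus A_i^\ell$. Also $a_i \in A_i \subseteq A_i^\ell$. Because $A_i^\ell \in \ans_i(T')$, the defining inequalities give $\choice(a_i, t_{-i}) \wpref_{a_i} \choice(a_i', t_{-i}')$ and $\choice(a_i', t_{-i}') \wpref_{a_i'} \choice(a_i, t_{-i})$, which is exactly what is needed. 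Hence $A_i \in \ans_i(T')$.

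Part~2 follows from part~1 together with the fact, already noted in the text, that $\ans_i(T')$ is closed under complements relative to $T_i'$: given $\{A_i^k\}$ with $\bigcup_k A_i^k \subset T_i'$, apply De Morgan to write $\bigcup_k A_i^k = T_i' \setminus \bigcap_k (T_i' \setminus A_i^k)$. Each complement $T_i' \setminus A_i^k$ lies in $\ans_i(T')$; their intersection is nonempty (its complement $\bigcup_k A_i^k$ is a proper subset of $T_i'$), so by part~1 it lies in $\ans_i(T')$; and then its complement, which is $\bigcup_k A_i^k$, lies in $\ans_i(T')$ by closure under complements. Alternatively one can redo the direct argument: for $a_i \in \bigcup_k A_i^k$ pick $\ell$ with $a_i \in A_i^\ell$, observe any $a_i' \in T_i' \setminus \bigcup_k A_i^k$ lies in $T_i' \setminus A_i^\ell$, and invoke the inequalities for $A_i^\ell$.

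There is essentially no hard obstacle here: the only things to be careful about are the nonemptiness/properness side conditions (which is why the lemma states them as hypotheses rather than conclusions) and the direction of the membership — that the witnessing index is chosen for the element \emph{not} in the intersection (part~1) or \emph{in} the union (part~2), so that both the ``$a_i$'' and the ``$a_i'$'' conditions line up with a single $A_i^\ell$. I would present part~1 in full and then derive part~2 from it via complements in one or two lines, remarking that the direct argument is symmetric.
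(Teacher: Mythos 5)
Your proof is correct and follows essentially the same route as the paper's: Clause 1 by directly verifying the defining inequalities, choosing for each $a_i' \notin \bigcap_k A_i^k$ a witnessing index $\ell$ with $a_i' \in T_i' \setminus A_i^\ell$ (the paper phrases this via the identity $\bigcup_k (T_i' \setminus A_i^k) = T_i' \setminus \bigcap_k A_i^k$), and Clause 2 by De Morgan together with closure under relative complements and Clause 1. No gaps.
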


\begin{proof}
  Suppose the antecedent of Clause 1, $\emptyset \subset \bigcap_k A_i^k$.  By \Cref{def:odag}, we have $\bigcap_k A_i^k \subset T'_i$.  Take any $t_i \in \bigcap_k A_i^k$.  For all $k$, $A_i^k \in \ans_i(T')$.  Thus, for all $t'_i \in \bigcup_k (T'_i \setminus  A_i^k)$, we have 
  \begin{equation}
    \forall t_{-i}, t'_{-i} \in T'_{-i}: \choice(t_i,t_{-i}) \wpref_{t_i} \choice(t'_i,t'_{-i})
  \end{equation}
  \begin{equation}
    \forall t_{-i}, t'_{-i} \in T'_{-i}: \choice(t'_i,t'_{-i}) \wpref_{t'_i}  \choice(t_i,t_{-i}).
  \end{equation}
  Moreover, $\bigcup_k ( T'_i \setminus A_i^k ) = T'_i \setminus \bigcap_k A_i^k $, so $\bigcap_k A_i^k \in \ans_i(T')$.  This proves Clause 1 of \Cref{lem:edgestructure}.
  
  Suppose the antecedent of Clause 2, $\bigcup_k A_i^k \subset T'_i$.  Note that $T'_i \setminus \bigcup_k A_i^k = \bigcap_k ( T'_i \setminus A_i^k )$, so we have
  \begin{equation}\label{eq:edgestructure_1}
    \emptyset \subset \bigcap_k (T'_i \setminus A_i^k).
  \end{equation}
  For all $k$, $A_i^k \in \ans_i(T')$, so by closure under relative complement we have that for all $k$,
  \begin{equation}\label{eq:edgestructure_2}
    T'_i \setminus A_i^k \in \ans_i(T').
  \end{equation}
  By Clause 1, \eqref{eq:edgestructure_1}, and \eqref{eq:edgestructure_2}, we have that
  \begin{equation}\label{eq:edgestructure_3}
    \bigcap_k (T'_i \setminus A_i^k) \in \ans_i(T').
  \end{equation}    
  Substituting $\bigcap_k ( T'_i \setminus A_i^k ) = T'_i \setminus \bigcup_k A_i^k$ in \eqref{eq:edgestructure_3} and applying closure under relative complement yields $\bigcup_k A_i^k \in \ans_i(T')$,  which proves Clause 2 of \Cref{lem:edgestructure}.
\end{proof}

These observations imply there exists a partition of $T_i'$ that contains all minimal elements of $\ans_i(T')$. This partition provides the most fine-grained set of answers of any obvious query for agent $i$ at vertex $T'$. We formalize this idea below.

\begin{lemma}\label{lem:min_partition}
  Let $T'$ be any vertex in the O-dag of $\choice$.  For any $i$, if $\ans_i(T')$ is non-empty, then there exists a partition $\minans_i(T')$ of $T'_i$ such that:
  \begin{enumerate}
  \item $\minans_i(T') \subseteq \ans_i(T')$
  \item For any $A_i \in \ans_i(T')$ and any $t_i \in A_i$, the unique element $A'_i \in \minans_i(T')$ such that $t_i \in A'_i$ satisfies $A'_i \subseteq A_i$.
  \end{enumerate}
\end{lemma}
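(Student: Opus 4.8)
The plan is to construct $\minans_i(T')$ explicitly, as the common refinement of the bipartitions of $T'_i$ induced by the members of $\ans_i(T')$, and then to verify the two clauses directly.

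First I would record two elementary facts. Since $\ans_i(T')$ is non-empty, fix some $\bar A_i \in \ans_i(T')$; by \Cref{def:odag}, every member of $\ans_i(T')$ is a proper non-empty subset of $T'_i$. Second, as noted just before \Cref{lem:edgestructure}, $\ans_i(T')$ is closed under relative complement. Combining these, for every $t_i \in T'_i$ the family
\[
  \cF_i(t_i) \;:=\; \{\, A_i \in \ans_i(T') \mid t_i \in A_i \,\}
\]
is non-empty, since it contains $\bar A_i$ or $T'_i \setminus \bar A_i$. Define $B_i(t_i) := \bigcap \cF_i(t_i)$. Then $t_i \in B_i(t_i)$, so $\emptyset \subset B_i(t_i)$, and hence by Clause 1 of \Cref{lem:edgestructure} we get $B_i(t_i) \in \ans_i(T')$. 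Note also that $B_i(t_i)$ is the smallest member of $\ans_i(T')$ containing $t_i$: if $A_i \in \ans_i(T')$ with $t_i \in A_i$ then $A_i \in \cF_i(t_i)$, so $B_i(t_i) \subseteq A_i$.

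The crux is to show that $\minans_i(T') := \{\, B_i(t_i) \mid t_i \in T'_i \,\}$ is a partition of $T'_i$. Its blocks are non-empty and cover $T'_i$ because $t_i \in B_i(t_i)$, so it suffices to prove that any two blocks which meet coincide, and for this I would establish the key claim: if $s_i \in B_i(t_i)$ then $\cF_i(s_i) = \cF_i(t_i)$ (whence $B_i(s_i) = B_i(t_i)$). For the inclusion $\cF_i(t_i) \subseteq \cF_i(s_i)$: any $A_i \in \cF_i(t_i)$ satisfies $B_i(t_i) \subseteq A_i$ and $s_i \in B_i(t_i)$, so $s_i \in A_i$. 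For the reverse inclusion I use closure under complement: if $A_i \in \ans_i(T')$ with $t_i \notin A_i$, then $T'_i \setminus A_i \in \cF_i(t_i) \subseteq \cF_i(s_i)$, so $s_i \notin A_i$; contrapositively every $A_i \in \cF_i(s_i)$ contains $t_i$, i.e. $\cF_i(s_i) \subseteq \cF_i(t_i)$. With the claim in hand, if $r_i \in B_i(t_i) \cap B_i(s_i)$ then $B_i(r_i) = B_i(t_i)$ and $B_i(r_i) = B_i(s_i)$, so the two blocks are equal; hence distinct blocks are disjoint and $\minans_i(T')$ is a partition.

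Finally I would check the two clauses. Clause 1 is immediate since each $B_i(t_i) \in \ans_i(T')$. For Clause 2, given $A_i \in \ans_i(T')$ and $t_i \in A_i$, the unique block of $\minans_i(T')$ containing $t_i$ is $B_i(t_i)$, and $A_i \in \cF_i(t_i)$ gives $B_i(t_i) = \bigcap \cF_i(t_i) \subseteq A_i$, as required. I expect the main obstacle to be the partition step, specifically obtaining \emph{both} inclusions in $\cF_i(s_i) = \cF_i(t_i)$: the minimality inclusion is routine, but the reverse inclusion genuinely relies on the symmetry built into \Cref{def:odag} (closure of $\ans_i(T')$ under relative complement). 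Everything else is bookkeeping on top of Clause 1 of \Cref{lem:edgestructure}.
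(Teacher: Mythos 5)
Your proposal is correct and follows essentially the same route as the paper: both construct $\minans_i(T')$ as $\{\bigcap\{A_i\in\ans_i(T')\mid t_i\in A_i\}\mid t_i\in T'_i\}$, use closure under relative complement to get non-emptiness of each fiber and Clause 1 of \Cref{lem:edgestructure} to place each block in $\ans_i(T')$, and read off the two clauses from minimality. The only (cosmetic) difference is in the disjointness step: the paper cases on whether $t_i\in A'_i$ and uses closure under intersection, while you show the fiber $\cF_i(\cdot)$ is constant on each block via closure under complement; both are sound.
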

\begin{proof}
  For any $t_i \in T'_i$, let us define
  \begin{equation}
    \ans_i(t_i,T') \equiv \left\{A_i \in \ans_i(T') \ \middle| \ t_i \in A_i \right\}.
  \end{equation}
  Note that by $\ans_i(T')$ non-empty, there exists at least one element $A'_i \in \ans_i(T')$.  By closure under relative complement, we have $T'_i \setminus A'_i \in \ans_i(T')$.  Hence, for all $t_i \in T'_i$, $\ans_i(t_i,T')$ is non-empty. Now we define
  \begin{equation}
    \minans_i(T') \equiv \left\{ \bigcap \ans_i(t_i,T')  \ \middle| \ t_i \in T'_i \right\}.
  \end{equation}
  Observe that for all $t_i \in T'_i$, we have $t_i \in \bigcap \ans_i(t_i,T')$, which implies that $\bigcup \minans_i(T') = T'$. 
  
  Next we prove Clause 1 of \Cref{lem:min_partition}.  Take any $t_i \in T'_i$. By construction $\emptyset \subset \{t_i\} \subseteq \bigcap \ans_i(t_i,T')$. By \Cref{lem:edgestructure}, $\bigcap \ans_i(t_i,T') \in \ans_i(T')$.  And $t_i$ was chosen arbitrarily, so we have that $\minans_i(T') \subseteq \ans_i(T')$, which proves Clause 1.

  Now we prove that any two elements of $\minans_i(T')$ are either disjoint or identical. Take any $A_i, A'_i \in \minans_i(T')$.  By Clause 1, $A_i, A'_i \in  \ans_i(T')$.  By construction, there exists $t_i$ such that $A_i = \bigcap \ans_i(t_i,T')$.  Either $t_i \in A'_i$ or $t_i \in T'_i \setminus A'_i$.

  Suppose $t_i \in A'_i$.  By \Cref{lem:edgestructure}, $A_i \cap A'_i \in \ans_i(T')$, so  $A_i \cap A'_i \in \ans_i(t_i,T')$, which yields $A_i = \bigcap \ans_i(t_i,T') \subseteq A_i \cap A'_i \subseteq A'_i$.

  Suppose $t_i \in T'_i \setminus A'_i$.  By closure under relative complement,  $T'_i \setminus A'_i \in \ans_i(T')$. By \Cref{lem:edgestructure}, $A_i \cap (T'_i \setminus A'_i) \in \ans_i(T')$, so $A_i \cap (T'_i \setminus A'_i) \in \ans_i(t_i,T')$.  Thus, we have $A_i = \bigcap \ans_i(t_i,T') \subseteq A_i \cap (T'_i \setminus A'_i) \subseteq T'_i \setminus A'_i$.  This implies that $A_i \cap A'_i = \emptyset$.

  We have proved that either $A_i \cap A'_i = \emptyset$ or $A_i \subseteq A'_i$.  By symmetry, we also have that $A_i \cap A'_i = \emptyset$ or $A_i \supseteq A'_i$. Thus, $A_i$ and $A'_i$ are either disjoint or identical.   We have established that $\minans_i(T')$ is a partition of $T'$.

  Clause 2 follows by construction of $\minans_i(T')$.  Observe that for any $A_i \in \ans_i(T')$ and any $t_i \in A_i$, we have that $A_i \in \ans_i(t_i,T')$.  So for $A'_i = \bigcap \ans_i(t_i,T')$, we have that $A'_i \subseteq A_i$, which completes the proof.
\end{proof}

We are now ready to prove \Cref{thm:ODAG_OSP}.

\begin{proof}[Proof of \Cref{thm:ODAG_OSP}]
  Suppose there exists an OSP mechanism for $\choice$, and denote its vertex set $V$ and its edge set $E$.  Take any $t \in T$.  There exists a terminal vertex $z$ such that $t \in T^z$.  Hence there exists a path $v^1, v^2, \ldots, v^K$ in the tree such that $\overline{v} = v^1$ and $v^K = z$.

  We now prove that for all $k < K$, the O-dag of $\choice$ contains the edge $T^{v^k} \Rightarrow T^{v^{k+1}}$.  Let $i = P(v^k)$.  By construction, $\emptyset \subset T_{i}^{v^{k+1}} \subset T_{i}^{v^k}$ and $T_{-i}^{v^{k+1}} = T_{-i}^{v^k}$. Take any $t_i \in T_i^{v^{k+1}}$ and any $t'_i \in T_i^{v^k} \setminus T_i^{v^{k+1}}$.  Since $t_i$ and $t'_i$ diverge at $v^k$ and the mechanism is OSP, it follows that for any $t_{-i},t'_{-i} \in T_{-i}^{v^k}$,
  \begin{equation}
    \choice(t_i,t_{-i}) \wpref_{t_i} \choice(t'_i,t'_{-i}).
  \end{equation}
  Symmetry of \Cref{def:OSP} implies that
  \begin{equation}
    \choice(t'_i,t'_{-i}) \wpref_{t'_i} \choice(t_i,t_{-i}).
  \end{equation}
  which proves that the O-dag of $\choice$ contains the edge $T^{v^k} \Rightarrow T^{v^{k+1}}$.

  Since $\choice$ is measurable with respect to the terminal vertices of the mechanism, we have that for all $t', t'' \in T^{v^K}$, $\choice(t') = \choice(t'')$.  This implies that if $T^{v^K} \neq \{t\}$, then the O-dag contains the edge $T^{v^K} \Rightarrow \{t\}$.  Thus, for all $t$, the O-dag of $\choice$ has a path from the root $T$ to $\{t\}$.  This proves that if there exists an OSP mechanism for $\choice$, then the O-dag of $\choice$ is nicely connected.  

  Suppose that the O-dag of $\choice$ is nicely connected.  We now construct an OSP mechanism for $\choice$.  Essentially, we cycle through the bidders in a fixed order, keeping track of the type profiles consistent with all previous reports, $\tilde{T}^k$.  At the $k$th step, we ask bidder $i$ to report a cell of the minimal partition $\minans_i(\tilde{T}^{k})$ if it is non-empty.  Otherwise, we skip bidder $i$.  The key is to show that the O-dag being nicely connected implies that this procedure does not get stuck for any type profile.

  Formally, we generate the mechanism using the following procedure:  As before, we identify agents with integers $1, \ldots, n$.  Let $i^k \equiv (k\mod n) + 1$.  We initialize $\tilde{T}^0 = T$.  For $k = 0, 1, 2, \ldots$, if $\ans_{i^k}(\tilde{T}^k)$ is non-empty, then $i^k$ reports $A_{i^k} \in \minans_{i^k}(\tilde{T}^k)$ such that $t_{i^k} \in A_{i^k}$, which is well-defined and unique by \Cref{lem:min_partition}.  We then define $\tilde{T}^{k+1} \equiv (A_{i^k}, \tilde{T}_{-i^k}^k)$.  If $\ans_{i^k}(\tilde{T}^k)$ is empty, we define $\tilde{T}^{k+1} \equiv \tilde{T}^{k}$.  The procedure terminates if $\tilde{T}^{k}$ is singleton.

  We now prove that the above procedure generates a mechanism for $c$.  In particular, we show that for any $t \in T$, there is a terminal vertex $v$ of the mechanism such that $T^v = \{t\}$, which implies that every type profile results in a finite play, and also that the choice rule is measurable with respect to the partition induced by the terminal vertices.

  Take any $t \in T$.  Since the O-dag is nicely connected, there exists a path in the O-dag from $T$ to $\{t\}$, which we denote $T^0, T^1, \ldots, T^K$.  Consider the sequence generated by the procedure when the type profile is $t$, which we denote $(\tilde{T}_k)_{k = 1}^\infty$.  For convenience, if the procedure terminates at step $\tilde{K}$, we define $\tilde{T}^k = \tilde{T}^{\tilde{K}}$ for $k > \tilde{K}$.

  We prove by induction that $\tilde{T}^{nk} \subseteq T^k$ for all $k \leq K$.  The inductive hypothesis holds for $k = 0$, since $\tilde{T}^0 = T = T^0$.  Suppose it holds for some $k < K$; we now prove it holds for $k + 1$.  Since $T^0, T^1, \ldots, T^K$ is a path in the O-dag, there exists some agent $j$ and some $A_j \subset T_j^k$ such that $(A_j, T_{-j}^k) = T^{k+1}$, and the O-dag contains edge
  \begin{equation}\label{eq:use_lemma_indedge_0}
    T_j^k \Rightarrow (A_j, T_{-j}^k)
  \end{equation}
  If $\tilde{T}_j^{nk} \subseteq A_j$, then $\tilde{T}^{n(k+1)} \subseteq \tilde{T}^{nk} \subseteq (A_j, T_{-j}^k) = T^{k+1}$ and we are done.  Suppose not, so $A_j \cap \tilde{T}_j^{nk} \subset \tilde{T}_j^{nk}$.  Observe that $t \in A_j \cap \tilde{T}_j^{nk}$, so we have that
  \begin{equation}\label{eq:use_lemma_indedge_1}
    \emptyset \subset A_j \cap \tilde{T}_j^{nk} \subset \tilde{T}_j^{nk}.
  \end{equation}
  $i^l = j$ for some $l$ weakly between $nk$ and $n(k + 1) - 1$, and $\tilde{T}_j^l = \tilde{T}_j^{nk}$.  By construction and then by the inductive hypothesis,
  \begin{equation}\label{eq:use_lemma_indedge_2}
    \tilde{T}^l \subseteq \tilde{T}^{nk} \subseteq T^k.
  \end{equation}
  By \eqref{eq:use_lemma_indedge_0}, \eqref{eq:use_lemma_indedge_1}, \eqref{eq:use_lemma_indedge_2}, and \Cref{lem:indedge}, the O-dag contains the edge
  \begin{equation}\label{eq:intersect_ok}
    \tilde{T}^l \Rightarrow (A_j \cap \tilde{T}_j^{l}, \tilde{T}^l_{-j}).
  \end{equation}
  Since $\minans_j(\tilde{T}^l)$ consists of the minimal elements of $\ans_j(\tilde{T}^l)$, \eqref{eq:intersect_ok} and the definition of the sequence imply that
  \begin{equation}
    \tilde{T}_j^{l+1} \subseteq A_j \cap \tilde{T}_j^{l}.
  \end{equation}
  Hence,
  \begin{equation}
    \tilde{T}^{n(k+1)} \subseteq \tilde{T}^{l+1} \subseteq ( A_j \cap \tilde{T}_j^{l}, \tilde{T}_{-j}^{nk}) \subseteq (A_j, T_{-j}^k) = T^{k + 1}
  \end{equation}
  which proves the inductive step.  Thus, for all $k \leq K$, we have that $\tilde{T}^{nk} \subseteq T^k$.  In particular, $\tilde{T}^{nK} \subseteq T^K = \{t\}$, so the procedure terminates after no more than $nK$ steps.

  Consequently, for every type profile $t$, there exists a terminal vertex $v$ in the corresponding mechanism such that $T^v = \{t\}$, so the procedure generates a mechanism for $\choice$.  

  Finally, the mechanism generated by the procedure is OSP.  In particular, if $t_i$ and $t'_i$ diverge at some vertex $v$ of the mechanism, then $t_i$ and $t'_i$ are in distinct cells of $\minans_i(T^v)$.  Hence, we have that for any $t_{-i},t'_{-i} \in T_{-i}^v$,
  \begin{equation}
    \choice(t_i,t_{-i}) \wpref_{t_i} \choice(t'_i,t'_{-i}).
  \end{equation}
  Thus, if the O-dag of $\choice$ is nicely connected, then there exists an OSP mechanism for $\choice$.
\end{proof}

\section{An algorithm for deciding whether a choice rule is OSP-implementable}\label{sec:polytime}

This section studies the computational problem of determining whether a choice rule is OSP-implementable, assuming finite type spaces.  For an arbitrary choice rule $\choice$, we use the structure of the O-dag of $\choice$ to efficiently determine if $\choice$ is OSP-implementable.

Suppose we have finite type sets $(T_i)_{i \in N}$, outcome set $X$, a choice rule $\choice:\prod_i T_i \rightarrow X$, and utility functions $u_i: T_i \times X \rightarrow \bR$.  An instance of the decision problem is comprised by $\sum_i|T_i|$ functions, one for each type $t_i$,
\begin{equation}
  u_i(t_i,\cdot)\circ\choice:T_1\times\cdots\times T_n\rightarrow\bR.
\end{equation}
Each of these functions requires $\prod_i|T_i|$ real numbers to express in general.  In our decision problem, we take these functions as the input.  Hence, we define
\begin{equation}
  |\choice|=\left(\sum_i|T_i|\right)\prod_i|T_i|
\end{equation}
which is the size of the table that specifies the payoffs to each type of each agent, for each profile of reports.  Running times for algorithms on $\choice$ will be expressed as a function of $|\choice|$.\footnote{A limitation of this approach is that, because we take the payoffs induced by the choice rule as an input, we abstract from the difficulty of computing the choice rule itself.  For instance, there are settings in which welfare-maximizing choice rules are not tractable to compute \citep{hartline2015non, leyton2017economics}.  In such settings, even verifying strategy-proofness can be difficult.}

First consider the problem of deciding whether $\choice$ is strategy-proof. For each $i$ and each $t_i\in T_i$, it is necessary and sufficient to verify for all $t_i'\in T_i$ and all $t_{-i}\in T_{-i}$ that $u_{t_i}(\choice(t_i,t_{-i}))\geq u_{t_i}(\choice(t_i',t_{-i}))$. This verification requires $\sum_i|T_i|^2|T_{-i}|=|\choice|$ steps, so deciding whether $\choice$ is SP requires linear time.

The following result states that it is not much more difficult to determine if $\choice$ is OSP-implementable. 

\begin{theorem}
  \label{thm:subquad_decide}
  There exists a $O(|\choice|\sum_i|T_i|)$-time algorithm for deciding whether $\choice$ is OSP-implementable.
\end{theorem}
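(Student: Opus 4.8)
The plan is to turn Corollary~\ref{corr:finite_rule_out} into an algorithm. By that corollary, $\choice$ is OSP-implementable if and only if \emph{every} non-singleton vertex of the O-dag has a child. The naive approach — enumerate the (exponentially many) vertices — is hopeless, so instead I would reformulate the condition locally: I claim it suffices to check, for each agent $i$ and each \emph{pair} of distinct types $t_i, t_i' \in T_i$, whether $t_i$ and $t_i'$ can ever be ``separated'' by an obvious query in any vertex that contains both, and then verify these separations are jointly realizable starting from the root $T$. More precisely, I would run a process that greedily builds, for each target profile $t \in T$, a path $T = T^1, T^2, \dots, \{t\}$ in the O-dag without materializing the whole graph: at the current vertex $\tilde T$, cycle through agents $i$ and test whether $\ans_i(\tilde T)$ is non-empty by checking whether there is some partition of $\tilde T_i$ into two non-empty cells $A_i, \tilde T_i \setminus A_i$ satisfying the O-dag edge condition of Definition~\ref{def:odag}; if so, refine $\tilde T_i$ to the cell of $\minans_i(\tilde T)$ containing $t_i$ (which by Lemma~\ref{lem:min_partition} is well-defined and the maximal safe refinement), and otherwise skip $i$. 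This is precisely the constructive procedure from the proof of Theorem~\ref{thm:ODAG_OSP}, and by that proof it reaches $\{t\}$ in at most $n|T_i|$-ish rounds whenever the O-dag is nicely connected; if instead it stalls at a non-singleton vertex where no agent has a non-empty answer set, Corollary~\ref{corr:finite_rule_out} (equivalently Corollary~\ref{corr:rule_out}) says $\choice$ is not OSP-implementable, so the procedure is a correct decision algorithm.

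The key steps, in order, are: (1) state the subroutine that, given a vertex $\tilde T$ and agent $i$, computes $\minans_i(\tilde T)$; (2) bound its running time; (3) describe the outer loop that runs the constructive procedure and reports ``OSP-implementable'' iff it never stalls at a non-singleton vertex; (4) prove correctness by appeal to Theorem~\ref{thm:ODAG_OSP} (the forward direction — stalling implies no OSP mechanism — uses Corollary~\ref{corr:rule_out}, and the reverse — success implies an OSP mechanism — is exactly the construction in the proof of Theorem~\ref{thm:ODAG_OSP}); (5) tally the total running time and check it is $O(|\choice| \sum_i |T_i|)$. For step (1): to compute $\minans_i(\tilde T)$ I would compute, on the type set $\tilde T_i$, the finest partition compatible with all the pairwise constraints — namely, put $t_i, t_i'$ in the same cell whenever placing them in opposite cells of a binary split would violate the edge condition, i.e. whenever there exist $t_{-i}, t_{-i}' \in \tilde T_{-i}$ with $\choice(t_i, t_{-i}) \not\wpref_{t_i} \choice(t_i', t_{-i}')$ or $\choice(t_i', t_{-i}') \not\wpref_{t_i'} \choice(t_i, t_{-i})$ — and then take the transitive closure (a union-find over $\tilde T_i$). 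By Lemma~\ref{lem:edgestructure} this partition's non-trivial cells are exactly $\minans_i(\tilde T)$ when there is more than one cell, and $\ans_i(\tilde T) = \emptyset$ exactly when there is a single cell.

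For the running-time accounting: building the pairwise-constraint graph on $\tilde T_i$ requires, for each ordered pair $(t_i, t_i')$, scanning $t_{-i} \in \tilde T_{-i}$ and comparing $\choice(t_i, t_{-i})$ against $\choice(t_i', t_{-i}')$ for $t_{-i}' \in \tilde T_{-i}$ — but one can precompute, for each $i$ and each pair $(t_i, t_i')$, the max of $u_i(t_i', \choice(t_i', \cdot))$ and the relevant min over $\tilde T_{-i}$, so that the comparison per pair costs $O(|T_{-i}|)$ rather than $O(|T_{-i}|^2)$; then one round of the outer loop at agent $i$ costs $O(|T_i|^2 |T_{-i}|)$, which summed over agents is $O(|\choice|)$. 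The outer loop makes at most $O(\sum_i |T_i|)$ refinement steps before reaching a singleton or stalling (each step strictly shrinks some $\tilde T_i$), and we must do this for every target $t$... actually a cleaner route: a single pass suffices since the procedure in the proof of Theorem~\ref{thm:ODAG_OSP} builds one mechanism handling all $t$ simultaneously — the O-dag is nicely connected iff \emph{that one} procedure terminates for every $t$, and termination/stalling of the whole tree is detected in $O(\sum_i |T_i|)$ rounds, each costing $O(|\choice|)$, for the claimed $O(|\choice| \sum_i |T_i|)$ total. \textbf{The main obstacle} I anticipate is the bookkeeping in step (5): one has to argue that the branching procedure visits only $O(\sum_i |T_i|)$ \emph{distinct} vertices along any root-to-leaf path and that the $\min$-and-$\max$ precomputations can be maintained incrementally as $\tilde T$ refines (rather than recomputed from scratch at every vertex, which would blow up the bound) — this requires care because refining $\tilde T_i$ changes the feasible $t_{-i}$ ranges for the \emph{other} agents' subroutine calls, so the amortization argument has to track how total work scales as the type sets shrink down the tree.
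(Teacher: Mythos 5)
Your algorithm is essentially the paper's: compute $\minans_i(T')$ as the connected components of the pairwise ``must-stay-together'' graph on $T_i'$ (with the min/max precomputation giving $O(|T_i'|^2|T_{-i}'|)$ per agent, i.e.\ \Cref{lem:findchildren}), recursively branch over the cells of $\minans_i(T')$, report non-implementability iff some non-singleton vertex is reached at which every $\ans_i$ is empty, and justify correctness via \Cref{thm:ODAG_OSP} together with \Cref{corr:rule_out,corr:finite_rule_out}. That correctness argument goes through as you describe, and the depth bound of $\sum_i|T_i|$ is immediate since each recursive call strictly shrinks some coordinate.

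The one genuine gap is the step you flag yourself: the claim that each \emph{level} of the branching tree costs $O(|\choice|)$ in total. You suggest this requires incrementally maintaining the min/max tables as $\tilde T$ refines; it does not. The paper recomputes everything from scratch at every vertex and instead uses a superadditivity property of the per-vertex cost $f(T')=\bigl(\sum_j|T_j'|\bigr)\prod_j|T_j'|$: since the children of $T'$ are $(A_i,T_{-i}')$ with $A_i$ ranging over the partition $\minans_i(T')$ of $T_i'$,
\begin{equation}
  f(T')=\sum_{A_i\in\minans_i(T')}\Bigl(\sum_j|T_j'|\Bigr)|A_i|\prod_{j\neq i}|T_j'|\;\geq\;\sum_{A_i\in\minans_i(T')}\Bigl(|A_i|+\sum_{j\neq i}|T_j'|\Bigr)|A_i|\prod_{j\neq i}|T_j'|,
\end{equation}
and the right-hand side is the sum of $f$ over the children of $T'$. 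Hence the total work at recursion depth $d+1$ is at most the total work at depth $d$, which is at most the work at the root, namely $|\choice|$; multiplying by the depth bound $\sum_i|T_i|$ gives the claimed $O(|\choice|\sum_i|T_i|)$. Without this (or an equivalent) observation your accounting does not close: ``each round costs $O(|\choice|)$'' is true at the root, but a single level of the tree can contain many vertices, so you need the displayed inequality to propagate the bound down the tree.
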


To put this result in context, consider that while the number of histories in an arbitrary OSP mechanism for $\choice$ may grow exponentially in $|\choice|$, the gradual revelation principle of~\cite{bade_gibbard-satterthwaite_2017} guarantees that every OSP mechanism for $\choice$ has an OSP implementation of depth at most $\sum_i|T_i|$ and with at most $\prod_i|T_i|$ distinct terminal histories, so that $\choice$ has an OSP implementation of size at most $|\choice|$. It can be verified in polynomial time that any such implementation is OSP, so it follows that the problem of deciding whether $\choice$ is OSP-implementable lies in \textsf{NP}. However, the number of (extensive-form) mechanisms of size at most $|\choice|$ grows exponentially in $|\choice|$, so a brute force search through all mechanisms would require exponential time. A brute force search through the O-dag of $\choice$ is similarly inefficient, as the number of vertices in the O-dag grows exponentially in $|\choice|$. It is therefore not immediately clear that the problem of deciding whether $\choice$ is OSP-implementable lies in \textsf{P}; this is the contribution of \Cref{thm:subquad_decide}.  Note that $|\choice|\sum_i|T_i| \leq |\choice|^2$, with strict inequality if $\prod_i |T_i| > 1$, so \Cref{thm:subquad_decide} asserts that the problem can be decided in sub-quadratic time.

The proof of \Cref{thm:subquad_decide} leverages the structure of the O-dag presented in \Cref{sec:char_OSP} to efficiently find paths from the root to each singleton vertex, without a brute force search through all vertices. We will use the following lemma.

\begin{lemma}
  \label{lem:findchildren}
  For any vertex $T'$ in the O-dag of $c$, there exists a $O((\sum_i|T_i'|)\prod_i|T_i'|)$-time algorithm that determines for each $i$ if $\ans_i(T')$ is nonempty, and if so then computes the partition $\minans_i(T')$.
\end{lemma}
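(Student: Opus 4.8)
The plan is to compute, for each agent $i$, the finest partition $\minans_i(T')$ of $T'_i$ by working directly with the "indistinguishability" relation on types that underlies \Cref{def:odag}, rather than by enumerating the exponentially many candidate edges. The key observation is that for a single type $t_i \in T'_i$, the smallest answer set containing $t_i$ — namely $\bigcap \ans_i(t_i, T')$ from the proof of \Cref{lem:min_partition} — is the connected component of $t_i$ under a natural graph on $T'_i$. So I would first reduce the problem to a graph connectivity computation.

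Here is the sequence of steps. First, I would make precise the relevant "bad pair" relation: say types $a_i, a'_i \in T'_i$ are \emph{directly linked} if there exist $t_{-i}, t'_{-i} \in T'_{-i}$ violating one of the two inequalities in \Cref{def:odag}, i.e. $\choice(a_i, t_{-i}) \not\wpref_{a_i} \choice(a'_i, t'_{-i})$ or $\choice(a'_i, t'_{-i}) \not\wpref_{a'_i} \choice(a_i, t_{-i})$. I claim that a nonempty proper subset $A_i \subset T'_i$ lies in $\ans_i(T')$ if and only if $A_i$ is a union of connected components of the graph $G_i$ on vertex set $T'_i$ whose edges are the directly-linked pairs. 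One direction is immediate from the definition; the other direction follows from \Cref{lem:edgestructure} (closure under unions and intersections) together with closure under relative complement, since these imply that $\ans_i(T') \cup \{\emptyset, T'_i\}$ is exactly the collection of unions of the minimal blocks, and the minimal blocks are forced to be the components of $G_i$. Granting this claim, $\minans_i(T')$ is exactly the set of connected components of $G_i$ (when $G_i$ is connected, $\ans_i(T')$ is empty; otherwise $\minans_i(T')$ is the component partition), which takes care of both parts of the lemma at once.

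Second, I would bound the running time of actually carrying this out. Building $G_i$ naively requires checking, for each unordered pair $\{a_i, a'_i\}$ of types in $T'_i$ and each pair $t_{-i}, t'_{-i} \in T'_{-i}$, whether an inequality is violated; that is $O(|T'_i|^2 |T'_{-i}|^2)$ per agent, which is too slow. The fix is to avoid iterating over pairs $(t_{-i}, t'_{-i})$: for each agent $i$ and each $a_i \in T'_i$, precompute $\overline{u}_i(a_i) \equiv \min_{t_{-i} \in T'_{-i}} u_i(a_i, \choice(a_i, t_{-i}))$ (the worst truth-telling payoff) and $M_i(a_i, a'_i) \equiv \max_{t'_{-i} \in T'_{-i}} u_i(a_i, \choice(a'_i, t'_{-i}))$ (the best payoff to $a_i$ from reporting $a'_i$). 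Then $a_i, a'_i$ are directly linked iff $\overline{u}_i(a_i) < M_i(a_i, a'_i)$ or $\overline{u}_{i}(a'_i) < M_i(a'_i, a_i)$. Computing all the $\overline{u}_i(a_i)$ costs $O(|T'_i|\,|T'_{-i}|)$ and computing all the $M_i(a_i,a'_i)$ costs $O(|T'_i|^2 |T'_{-i}|)$; running union-find over the resulting $O(|T'_i|^2)$ edges adds lower-order terms. Summing over $i$ gives $\sum_i O(|T'_i|^2 |T'_{-i}|) = O\bigl((\sum_i |T'_i|)\prod_i |T'_i|\bigr)$, as claimed, using $|T'_i|\,|T'_{-i}| = \prod_j |T'_j|$ and $\sum_i |T'_i|^2 |T'_{-i}| \le (\sum_i |T'_i|) \prod_j |T'_j|$.

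The main obstacle I anticipate is the correctness claim in the first paragraph — specifically, showing that a subset in $\ans_i(T')$ must be a \emph{union} of the components of $G_i$, i.e. that it cannot "split" a component. This is where \Cref{lem:edgestructure} does the real work: if some $A_i \in \ans_i(T')$ contained $a_i$ but not $a'_i$ with $a_i, a'_i$ in the same $G_i$-component, then by walking along a path in that component and repeatedly intersecting/complementing answer sets (each single edge of $G_i$, by definition, cannot be separated by any member of $\ans_i(T')$), one derives a contradiction. I would need to state and verify this "no single edge is cut" fact carefully, but it is essentially a restatement of \Cref{lem:edgestructure} applied to two-element configurations, so it should go through cleanly. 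A minor secondary point is bookkeeping about the degenerate cases ($T'_i$ a singleton, or $G_i$ connected so that $\ans_i(T')$ is empty), which the algorithm should detect and report directly.
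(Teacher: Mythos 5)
Your proposal is correct and follows essentially the same route as the paper: build, for each agent $i$, the graph on $T'_i$ whose edges are the pairs that cannot be separated by any valid answer set (detected via the same $\min$-versus-$\max$ reformulation over $T'_{-i}$), observe that $\ans_i(T')$ consists exactly of the nonempty proper subsets that are unions of connected components so that $\minans_i(T')$ is the component partition, and sum $O(|T'_i|^2|T'_{-i}|)$ over agents. The only cosmetic difference is that you route the ``an answer set cannot split a component'' step through \Cref{lem:edgestructure}, whereas it follows directly from \Cref{def:odag}: any component split forces a crossing edge, which by construction witnesses a violated inequality.
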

\begin{proof}
  For each agent $i$, construct an undirected graph $G_i(T')$ with vertex set $T'_i$ and edge set containing all pairs $\{t_i',t_i''\}$ such that
  \begin{equation}
    \min_{t'_{-i} \in T'_{-i}} u_i(t'_i,\choice(t_i',t'_{-i})) < \max_{t'_{-i} \in T'_{-i}} u_i(t'_i,\choice(t_i'',t'_{-i}))
  \end{equation}
  or
  \begin{equation}
    \min_{t'_{-i} \in T'_{-i}} u_i(t''_i,\choice(t_i'',t'_{-i})) < \max_{t'_{-i} \in T'_{-i}} u_i(t''_i,\choice(t'_i,t'_{-i}))
  \end{equation}
  That is, if $G_i(T')$ has an edge $\{t_i',t_i''\}$, then any child of $T'$ in the O-dag of $\choice$ contains either none or both of $t_i',t_i''$.  Each maximization or minimization takes $O(|T_{-i}'|)$ time, so the construction of $G_i(T')$ takes $O(|T_i'|^2\cdot|T_{-i}'|)$ time.  Thus, to construct $G_i(T')$ for all $i$ takes $O((\sum_i|T_i'|)\prod_i|T_i'|)$ time.
  
  For each $i$, the connected components of $G_i(T')$ can be found in $O(|T_i|^2)$ time.\footnote{Given an undirected graph with vertex set $V$ and edge set $E$, its connected components can be found in time proportional to $\max\{|V|,|E|\} \leq |V|^2$ \citep{hopcroft1973algorithm}.}  To do this for all $i$ takes $O(\sum_i |T_i|^2)$ time. By Definition~\ref{def:odag} and the construction of $G_i(T')$, a subset $\emptyset\subset A_i\subset T_i'$ belongs to $\ans_i(T')$ iff there are no edges in $G_i(T')$ between $A_i$ and $T_i'\setminus A_i$. Thus $\ans_i(T')=\emptyset$ iff $G_i(T')$ has a single connected component, and if $\ans_i(T')\neq\emptyset$, then $\minans_i(T')$ is the partition of $T_i$ given by the connected components of $G_i(T')$. The total running time of the algorithm is $O((\sum_i|T_i'|)\prod_i|T_i'|)+O(\sum_i|T_i'|^2)=O((\sum_i|T_i'|)\prod_i|T_i'|)$.
\end{proof}

We now define a recursive algorithm that, given any vertex $T'$ of the O-dag, determines whether the sub-dag rooted at $T'$ is nicely connected.

\begin{algorithm}[H]
  \SetKwFunction{NC}{NicelyConnected}
  \SetKwProg{Fn}{Function}{}{}
  \Fn{\NC{$T'$}}{
    \If{$|T'|=1$}{
      \Return{True}
    }\ElseIf{$\ans_i(T')=\emptyset\;\forall i\in N$}{
      \Return{False}
    }\Else{
      Choose some $i=i(T')\in N$ with $\ans_i(T')\neq\emptyset$ \\
      \Return{$\bigwedge_{A_i\in\minans_i(T')}$\NC{$A_i,T_{-i}'$}}\footnote{That is, return \emph{True} iff for all $A_i\in\minans_i(T')$, \NC{$A_i,T_{-i}'$} returns \emph{True}.}
    }
  }
  \caption{\label{alg:niceconn} Determine if subdag of O-dag rooted at $T'$ is nicely connected.}
\end{algorithm}

\begin{proof}[Proof of \Cref{thm:subquad_decide}]
  By \Cref{thm:ODAG_OSP}, choice rule $\choice$ is OSP-implementable if and only if the O-dag of $\choice$ is nicely connected. Therefore Algorithm~\ref{alg:niceconn} presents the desired algorithm \NC{$T'$}, which determines if the subdag of the O-dag of $\choice$ rooted at any vertex $T'$ is nicely connected.

  We show the correctness of \NC by induction on the vertex $T'$. For the base case, if $|T'|=1$ then the subdag rooted at $T'$ is by definition nicely connected. For the inductive step, assume that $|T'|>1$, and that for any vertex $T''\subset T'$, \NC{$T''$} returns True iff the subdag rooted at $T''$ is nicely connected. We now show that \NC{$T'$} returns True iff the subdag rooted at $T'$ is nicely connected.

  First assume that \NC{$T'$} returns True, so that there exists some $i$ such that $\ans_i(T')\neq\emptyset$ and for each $A_i\in\minans_i(T')$, the subdag rooted at $(A_i,T_{-i}')$ is nicely connected. Then for any $t\in T'$, by definition $t_i\in A_i$ for some $A_i\in\minans_i(T')$, so by the inductive hypothesis the O-dag of $\choice$ must have a path from $T'$ to $\{t\}$ whose first edge is $T' \Rightarrow (A_i,T_{-i}')$. Therefore if \NC{$T'$} returns True, then the subdag rooted at $T'$ is nicely connected.

  Now assume that the subdag rooted at $T'$ is nicely connected. By definition $T'$ must have outgoing edges in the O-dag of $\choice$, so there exists some $i$ with $\ans_i(T')\neq\emptyset$. Then for any $A_i\in\minans_i(T')$ and any $t\in (A_i,T_{-i}')$, nice connectedness at $T'$ implies that the O-dag of $\choice$ contains a path from $T'$ to $\{t\}$, from which it follows by Lemma~\ref{lem:indpath} that the O-dag contains a path from $(A_i,T_{-i}')$ to $\{t\}$. Thus the subdag rooted at $(A_i,T_{-i}')$ is nicely connected for all $A_i\in\minans_i(T')$, so \NC{$T'$} returns True, completing the inductive step.

  It remains to be shown that \NC{$T$} runs in $O((\sum_i|T_i)|\choice|)$ time. For any $d\geq 1$, let $\cV^d$ denote the set of all vertices $T'$ such that \NC{$T'$} is called at recursive depth $d$ during the execution of \NC{$T$}. For example, $\cV^1=\{T\}$, and either $\cV^2=\emptyset$ or $\cV^2=\minans_i(T)$ for some $i$. Because any call to \NC{$T'$} will only recursively invoke \NC{$T''$} if $T''_i\subseteq T_i'$ for all $i$ with a proper inclusion for some $i$, it follows that the recursion can go at most $\sum_i|T_i|$ levels deep, so $\cV^d=\emptyset$ for $d>\sum_i|T_i|$.

  For any vertex $T'$ in the O-dag of $\choice$, Lemma~\ref{lem:findchildren} gives a $O((\sum_i|T_i'|)\prod_i|T_i'|)$-time algorithm for determining if $\ans_i(T')=\emptyset$ and computing $\minans_i(T')$ for all $i$. Therefore the running time of \NC{$T$} is at most
  \begin{equation}
    O\left(\sum_{d=1}^{\sum_i|T_i|}\sum_{T'\in\cV^d}\left(\sum_i|T_i'|\right)\prod_i|T_i'|\right).
  \end{equation}
  Because $\cV^1=\{T\}$, the $d=1$ term of the above sum equals $|\choice|$. Thus to show the desired running time bound of $O((\sum_i|T_i|)|\choice|)$, it is sufficient to show that for all $d\geq 1$, the $d$-term of the sum above bounds the $(d+1)$-term, that is,
  \begin{equation}
    \label{eq:runtimedepth}
    \sum_{T'\in\cV^d}\left(\sum_i|T_i'|\right)\prod_i|T_i'| \geq \sum_{T'\in\cV^{d+1}}\left(\sum_i|T_i'|\right)\prod_i|T_i'|.
  \end{equation}
  Consider any $T'\in\cV^d$. If $\ans_i(T')=\emptyset$ for all $i$, then \NC{$T'$} makes no recursive calls. Otherwise, for some $i=i(T')$ this invocation calls \NC{$A_i,T_{-i}$} for all $A_i\in\minans_i(T')$, so
  \begin{equation}
    \label{eq:recursivecalls}
    \cV^{d+1} = \bigcup_{T'\in\cV^d \mid \exists i=i(T')\text{ with }\ans_i(T')\neq\emptyset}\{(A_i,T_{-i}') \mid A_i\in\minans_{i}(T')\}.
  \end{equation}
  For any $T'$ and $i=i(T')$,
  \begin{equation}
    \left(\sum_j|T_j'|\right)\prod_j|T_j'| = \sum_{A_i\in\minans_i(T')}\left(\sum_j|T_j'|\right)|A_i|\prod_{j\neq i}|T_j'| \geq \sum_{A_i\in\minans_i(T')}\left(|A_i|+\sum_{j\neq i}|T_j'|\right)|A_i|\prod_{j\neq i}|T_j'|.
  \end{equation}
  Summing this inequality over all $T'\in\cV^d$ and applying~(\ref{eq:recursivecalls}) gives~(\ref{eq:runtimedepth}), as desired.
\end{proof}

\Cref{thm:subquad_decide} guarantees that we can decide in polynomial time whether there exists an OSP mechanism for $\choice$. The following theorem shows that we can construct such an OSP mechanism in polynomial time, if one exists.

\begin{theorem}
  \label{thm:construct}
  If $\choice$ is OSP-implementable, then there exists $O((\sum_i|T_i|)|\choice|)$-time algorithm that constructs an OSP mechanism for $\choice$.
\end{theorem}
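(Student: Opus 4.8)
The plan is to turn Algorithm~\ref{alg:niceconn} into a constructive procedure: while it verifies nice connectedness it simultaneously emits a game tree, and I then check that this tree is an OSP mechanism for $\choice$ and that the whole computation stays within the claimed time budget. First I would observe that when $\choice$ is OSP-implementable, Theorem~\ref{thm:ODAG_OSP} says its O-dag is nicely connected, and in fact more is true: a greedy descent exactly as in the proof of Corollary~\ref{corr:finite_rule_out} shows that the subdag rooted at \emph{every} vertex $T'$ is nicely connected, so by the correctness of \textsc{NicelyConnected} established in the proof of Theorem~\ref{thm:subquad_decide}, \textsc{NicelyConnected}$(T')$ returns True for every vertex $T'$ it is called on — the ``return False'' branch is never reached. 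I would then augment the recursion so that the call on $T'$ also outputs a vertex of the mechanism: if $|T'|=1$ with $T'=\{t\}$, output a terminal vertex $v$ with $T^v=T'$ whose associated outcome is $\choice(t)$; otherwise, with $i=i(T')$ chosen and $\minans_i(T')$ already computed, recursively obtain vertices $v_{A_i}$ from the calls \textsc{NicelyConnected}$(A_i,T_{-i}')$ for each $A_i\in\minans_i(T')$, create a vertex $v$ with $P(v)=i$ carrying one out-edge $e_{A_i}$ to $v_{A_i}$ for each cell, labelled $T_i^{e_{A_i}}=A_i$, and output $v$. The root of the constructed mechanism is the vertex output by \textsc{NicelyConnected}$(T)$.

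\textbf{It is a mechanism for $\choice$.} Next I would verify the requirements in the definition of a mechanism in turn: the output is a rooted tree by construction; $T^{\overline v}=T$ since the top-level call is on $T$; at any non-terminal vertex $v$ the edge labels $(A_i)_{A_i\in\minans_i(T^v)}$ form a partition of $T_{P(v)}^v$ by \Cref{lem:min_partition}, and the record-keeping identities $T_i^{v'}=T_i^v$ for $i\neq P(v)$ and $T_{P(v)}^{v'}=T_{P(v)}^{e}$ hold because each child is literally of the form $(A_i,T_{-i}')$; the recursion terminates for every type profile because each recursive call strictly shrinks some coordinate while weakly shrinking all of them, so the depth is at most $\sum_i|T_i|$, exactly as argued for Theorem~\ref{thm:subquad_decide}; and every terminal vertex is a singleton, so each type profile reaches a unique terminal vertex and $\choice$ is trivially measurable with respect to the induced partition.

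\textbf{It is OSP.} This step I would lift almost verbatim from the final paragraph of the proof of \Cref{thm:ODAG_OSP}: if $t_i$ and $t_i'$ diverge at a vertex $v$ with $P(v)=i=i(T^v)$, then since the out-edges of $v$ are in bijection with the cells of $\minans_i(T^v)$, the types $t_i$ and $t_i'$ lie in distinct cells; letting $A_i\in\minans_i(T^v)$ be the cell containing $t_i$, \Cref{lem:min_partition} gives $A_i\in\ans_i(T^v)$ and $t_i'\in T_i^v\setminus A_i$, so \Cref{def:odag} yields $\choice(t_i,t_{-i})\wpref_{t_i}\choice(t_i',t_{-i}')$ for all $t_{-i},t_{-i}'\in T_{-i}^v$, which is precisely the condition of \Cref{def:OSP}.

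\textbf{Running time, and the one delicate point.} The dominant cost is identical to that in the proof of \Cref{thm:subquad_decide}: at recursion depth $d$ we invoke \Cref{lem:findchildren} on each $T'\in\cT_d$ at cost $O((\sum_i|T_i'|)\prod_i|T_i'|)$, and the layered inequality~\eqref{eq:runtimedepth} shows this telescopes over the at most $\sum_i|T_i|$ depths to $O((\sum_i|T_i|)|\choice|)$. The only new work is allocating vertices and edges: at $T'$ this costs $O(\sum_i|T_i'|)$ — one edge per cell of $\minans_i(T')$, with labels of total size $|T_i'|$, plus child pointers — which is dominated by the $O((\sum_i|T_i'|)\prod_i|T_i'|)$ term already counted since $\prod_i|T_i'|\geq 1$, together with $O(1)$ per leaf to record the outcome. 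Hence the construction runs in $O((\sum_i|T_i|)|\choice|)$ time. I expect the only spot needing genuine care, as opposed to a real obstacle, is the opening claim that the recursion never fails and always descends all the way down to singletons; this rests on finiteness of the type sets together with the fact, from \Cref{corr:finite_rule_out}, that for an OSP-implementable rule every non-singleton vertex of the O-dag has a child.
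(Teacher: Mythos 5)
Your proposal is correct and follows essentially the same route as the paper: both take the recursion tree of \textsc{NicelyConnected}$(T)$ as the mechanism, with player $i(T')$ acting at vertex $T'$ and out-edges given by the cells of $\minans_{i(T')}(T')$, and both inherit the running-time bound from the analysis in Theorem~\ref{thm:subquad_decide}. The paper states the mechanism is OSP ``by definition'' and asserts a constant-factor slowdown, whereas you spell out the verification of the mechanism axioms, the OSP property, and the bookkeeping cost explicitly; these are useful details but not a different argument.
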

\begin{proof}
  Let $\cM(T')$ be the subtree of the O-dag of $\choice$ given by the recursion tree from a call to \NC{$T'$}. Formally, the vertices of $\cM(T')$ are the O-dag vertices $T''$ for which \NC{$T''$} is called at some point during the execution of \NC{$T'$}, and the children in $\cM(T')$ of $T''$ are the vertices $\{(A_{i(T'')},T_{-i(T'')}):A_{i(T'')}\in\minans_{i(T'')}(T'')\}$. The tree $\cM(T)$ produced by \NC{$T$} is by definition an OSP mechanism for $c$, where the player at vertex $T'$ is $i(T')$. The function \NC{$T'$} may be modified to store and return the tree $\cM(T')$ during its execution with only a constant factor slowdown in running time, so the result follows by the running time analysis of \NC in \Cref{thm:subquad_decide}.
\end{proof}

\section{Anonymous choice rules}

The table inputs studied in \Cref{sec:polytime} are natural for settings with few agents. However, the size of the table grows exponentially in the number of agents $n$. In practice, choice rules for settings with many agents have symmetries that enable shorter descriptions.

In this section, we require that the choice rule is \textit{anonymous}, meaning each agent's utility is invariant under permutations of the other agents. Such choice rules permit a more concise input representation than the tables of \Cref{sec:polytime}, but we nevertheless show that there once again exist polynomial time algorithms for determining OSP-implementability and running OSP mechanisms, at least in the case of constant-sized type sets.

\begin{definition}
  A choice rule $\choice$ is \textbf{anonymous} if $T_1=T_2=\cdots=T_n$, and for every agent $i\in N$ and every permutation $\pi$ on $N\setminus\{i\}$, it holds for all $t_i,t_i'\in T_i$ and $t_{-i}\in T_{-i}$ that
  \begin{equation}
    u_i(t_i,\choice(t_i',t_{-i})) = u_i(t_i,\choice(t_i',\pi(t_{-i}))).
  \end{equation}
  Here $\pi(t_{-i})=(t_{\pi(j)})_{j\in N\setminus\{i\}}$ denotes the type profile obtained from $t_{-i}$ by permuting the agents $N\setminus\{i\}$ according to $\pi$.
\end{definition}

Note that we do \textit{not} require the utility functions $u_i: T_i \times X \rightarrow \bR$ to be the same across agents. That is, for distinct $i,j\in N$, the same type $t_i=t_j \in T_i=T_j$ may represent different utility functions $u_i(t_i,\cdot) \neq u_j(t_j,\cdot)$.

For example, second price auctions are anonymous, as an agent's utility depends only on the value of the highest bid that was placed by any other agent, and does not depend on who placed this highest bid.

All choice rules $\choice$ in this section will be assumed to be anonymous, unless specified otherwise. The following notion of a histogram will be helpful to analyze anonymous choice rules.

\begin{definition}
  Assume that $T_1=\cdots=T_n$. For a type profile $t\in T$, the \textbf{histogram of $t$} is the tuple $h(t)\in\{0,\dots,n\}^{T_i}$ defined by
  \begin{equation}
    h(t)_{a_i} = |\{j\in N:t_j=a_i\}|.
  \end{equation}
\end{definition}

For a vertex $T'$ the O-dag of $\choice$, it will often be helpful to consider the image $h(T')=\{h(t):t\in T'\}$. In particular, $h(T)$ is the set of all histograms for $n$ agents' types across $|T_i|$ possible types, so $|h(T)| = {n+|T_i|-1 \choose |T_i|-1}$.

This definition of a histogram also naturally applies to type profiles for subsets of the agents. In particular, for $t_{-i}\in T_{-i}$, we will often consider the histogram $h(t_{-i})\in\{0,\dots,n-1\}^{T_i}$.

The requirement that $\choice$ is anonymous simply means that all the composed functions
\begin{equation}
  u_i(t_i,\choice(t_i',\cdot))\circ\choice:T_{-i}\rightarrow\bR
\end{equation}
must not depend on the order of their inputs, that is, these functions all factor through the histogram function $h:T_{-i}\rightarrow h(T_{-i})$. Thus an anonymous choice rule can be provided as a table that assigns a utility value to each tuple $(i,t_i,t_i',h(t_{-i}))$ consisting of an agent $i$, their true type $t_i\in T_i$, their reported type $t_i'$, and the histogram $h(t_{-i})$ of the reported type profile $t_{-i}\in T_{-i}$ of the other agents. We therefore define
\begin{equation}
  |c|_{\anon} = n |T_i|^2 |h(T_{-i})| = n |T_i|^2 {n+|T_i|-2 \choose |T_i|-1},
\end{equation}
which is the number of real numbers in this table that specifies each agent's utility under every possible outcome for an anonymous choice rule $\choice$ with $n$ agents, each having $|T_1|=\cdots=|T_n|$ possible types.

The anonymous choice rule representation size $|\choice|_{\anon}$ is most reduced in comparison to the general table representation size $|\choice|$ when $|T_i|$ is small. Therefore in this section, we will typically think of $|T_i|=O(1)$ as being a constant, so that $|\choice|_{\anon}$ grows polynomially in $n$, whereas $|\choice|$ grows exponentially in $n$.

First consider the problem of deciding whether an anonymous choice rule $\choice$ is strategy-proof. For each $i$ and each $t_i\in T_i$, it is necessary and sufficient to verify for all $t_i'\in T_i$ and all histograms $h_{-i}\in h(T_{-i})$ that $u_i(t_i,\choice(t_i,t_{-i})) \geq u_i(t_i,\choice(t_i',t_{-i}))$, where $t_{-i}\in T_{-i}$ is any type profile such that $h(t_{-i})=h_{-i}$ (anonymity guarantees that the choice of $t_{-i}$ does not affect the relevant inequalities). This verification requires $\sum_i|T_i|^2|h(T_{-i})| = |c|_{\anon}$ steps, so deciding whether $\choice$ is SP requires linear time in $|c|_{\anon}$.

The following theorem is an analogue of \Cref{thm:subquad_decide} for anonoymous choice rules.

\begin{theorem}
  \label{thm:anon_decide}
  There exists a $O\left(n2^{|T_i|}{n+2^{|T_i|}-2 \choose 2^{|T_i|}-2}^2\right)$-time algorithm for deciding whether an anonymous choice rule $\choice$ is OSP-implementable.
\end{theorem}

When $|T_i|=O(1)$, \Cref{thm:anon_decide} shows that there exists a polynomial-time algorithm for determining whether $\choice$ is OSP-implementable. The strength of this result is in avoiding an exponential dependence on $n$. For instance, the algorithm of \Cref{thm:subquad_decide} visited all $|T_i|^n$ singleton vertices in the O-dag, so it is not sufficient to prove \Cref{thm:anon_decide}.

More generally, the histogram table representation for anonymous choice rules $\choice$ is sufficiently powerful that an extensive-form mechanism for $\choice$ may be exponentially large in $|\choice|_{\anon}$. Therefore we could not hope to verify OSP-implementability by any algorithm that explicitly computes an entire OSP implementation.

Instead, our proof of \Cref{thm:anon_decide} leverages the anonymity requirement to simultaneously check that large groups of O-dag vertices all have children. The correctness of the algorithm follows from \Cref{corr:finite_rule_out}.

As described above, an OSP mechanism for $\choice$ may be exponentially large in $|\choice|_{\anon}$, so we cannot hope for a fast algorithm that outputs the entire mechanism. Nevertheless, the following analogue of \Cref{thm:construct} shows that when $|T_i|=O(1)$, there exists a polynomial-time algorithm that provides oracle access to an OSP mechanism for any given anonymous choice rule $\choice$. Such oracle access is all that is needed to run the OSP mechanism in practice, as the oracle may be repeatedly queried to determine the current player and their action set at a given vertex.

\begin{theorem}
  \label{thm:anon_construct}
  If an anonymous choice rule $\choice$ is OSP-implementable, then there exists a $O\left(n|T_i|(n+|T_i|){n+|T_i|-2 \choose |T_i|-1}\right)$-time algorithm that provides oracle access to an OSP mechanism for $\choice$. Specifically, the algorithm takes as input the O-dag vertex $T^v$ associated to some vertex $v$ in the OSP mechanism for $\choice$, and outputs the player $P(v)$ at vertex $v$ as well as the partition $(T_{P(v)}^e)_{e\in\out(v)}$ of $T_{P(v)}^v$ that represents the action set of player $P(v)$ at vertex $v$.
\end{theorem}

If $|T_i|$ is allowed to grow, then whereas the bound in \Cref{thm:anon_decide} has a doubly exponential dependence on $|T_i|$, the bound in \Cref{thm:anon_construct} only has a singly exponential dependence. In particular, the time bound in \Cref{thm:anon_construct} is at most $O(n|\choice|_\anon)$, which is at most quadratic in $|\choice|_\anon$.

The proofs of \Cref{thm:anon_decide} and \Cref{thm:anon_construct} are in \Cref{app:anon_proofs}.

\section{Specifying choice rules with circuits}\label{sec:circuits}

So far, we have studied decision problems for which the inputs are tables, specifying the payoffs to each type of each agent, at each profile of reports.  However, in economic applications, utility functions and choice rules often have special structure that permits a more concise description.

In this section, we investigate what happens when we make the decision problem harder by using a more concise input language.  In particular, we modify the computational model to represent utility functions and choice rules as Boolean circuits.\footnote{Our results in this section also hold for less powerful computational models such as Boolean formulae.} A Boolean circuit consists of input nodes that are fed through a network of logic gates to the output nodes. The network is a directed acyclic graph, and the logic gates compute the Boolean functions AND, OR, and NOT. The size of a circuit is the number of logic gates.

We again consider agents $N$ with finite type sets $T_1,\dots,T_n$ and outcome set $X$. Each agent $i$ has utility function $u_i:T_i\times X\rightarrow\{0,\dots,|X|-1\}$, which reflects arbitrary ordinal preferences\footnote{Since our concern here is with deterministic strategy-proof choice rules, it is without loss of generality to limit attention to ordinal preferences.} over outcomes in $X$.

Utility functions $u_i$ and choice rule $\choice$ are represented as Boolean circuits $C_{u_i}$ and $C_{\choice}$ respectively. Letting each $T_i=\{0,\dots,|T_i|-1\}$, the information $(N,(T_1,\dots,T_n),(u_1,\dots,u_n),\choice)$ can be encoded as follows:
\begin{equation}
  \label{eq:encoding}
  (1^{|T_1|},\dots,1^{|T_n|},C_{u_1},\dots,C_{u_n},C_{\choice}).
\end{equation}
The outcome set $X$ is implicitly defined by the output bits of $C_{\choice}$. Note that the sizes $|T_i|$ are expressed in unary, using $\Theta(|T_i|)$ bits as opposed to $\Theta(\log|T_i|)$ bits, which permits a polynomial time algorithm with the above input to use $\text{poly}(\sum_i|T_i|)$ time.\footnote{If instead choice rules were encoded as $(|T_1|,\dots,|T_n|,C_{u_1},\dots,C_{u_n},C_{\choice})$, then \Cref{thm:versp} still holds, but the proof of \Cref{thm:verosp} breaks, as the verification algorithm would no longer run in polynomial time.}

The following results show that when choice rules are expressed as in~(\ref{eq:encoding}), deciding whether a choice rule is SP is co-\textsf{NP}-complete, and so is deciding whether a choice rule is OSP-implementable.

\begin{theorem}
  \label{thm:versp}
  Under the circuit model, deciding whether a choice rule is SP is co-\textsf{NP}-complete.
\end{theorem}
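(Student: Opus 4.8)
\textbf{Proof proposal for \Cref{thm:versp}.}

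The plan is to establish membership in co-\textsf{NP} and then \textsf{NP}-hardness of the complement, i.e., the problem ``is $\choice$ \emph{not} SP?''. For membership, observe that by \Cref{def:SP}, $\choice$ fails to be SP precisely when there exist $i$, $t_i$, $t_i'$, and $t_{-i}$ witnessing $\choice(t_i',t_{-i})\spref_{t_i}\choice(t_i,t_{-i})$. Such a witness consists of an agent index and three type vectors; since each $t_j\in\{0,\dots,|T_j|-1\}$ is encoded in $O(\log|T_j|)$ bits, the witness has size polynomial in the input length. Verifying the witness requires evaluating the circuits $C_\choice$ (twice) and $C_{u_i}$ (twice) and comparing outputs, which is polynomial in the circuit sizes. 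Hence ``is $\choice$ not SP?'' lies in \textsf{NP}, so ``is $\choice$ SP?'' lies in co-\textsf{NP}.

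For hardness, I would reduce from \textsf{SAT} (or its complement, \textsf{UNSAT}, to the SP problem directly). Given a Boolean formula $\varphi$ on variables $x_1,\dots,x_m$, build an instance with a single agent (so $T_{-i}$ is a singleton and the SP condition becomes a pure per-type statement), whose type set is $T_1=\{0,1,\dots,m\}$ — one ``default'' type and one type per variable — or more simply $T_1=\{0,1\}^m$ encoded so that $|T_1|=2^m$ but only $\Theta(m)$ bits of the input are needed to express it in unary... wait, that breaks the unary encoding. Instead I would use $T_1$ of polynomial size: let $T_1=\{0,1,\dots,m\}$, with outcomes $X=\{0,1\}$. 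The idea is to arrange the choice rule and utilities so that a single designated type (say type $0$) has a profitable misreport to some type $k$ if and only if the $k$-th clause... no — the cleanest route is: make type $0$ want outcome $1$, make the choice rule output $c(0)=0$ and $c(k)=\varphi$-related bits, so that type $0$ has a beneficial deviation iff $\varphi$ is satisfiable, encoding a truth assignment into the report. Concretely, take $T_1=\{0,1\}^m\cup\{\star\}$ but keep the unary size small by letting reports be $m$-bit strings; here the subtlety is that $\sum_i|T_i|$ must be polynomial, so I instead let each bit $x_j$ be a separate agent with $T_j=\{0,1\}$, making $n=m+1$ agents. Then a report profile $(b_1,\dots,b_m,\cdot)$ encodes an assignment; set $\choice(b,t_{m+1})$ and the utilities of agent $m+1$ so that agent $m+1$ (with a fixed type) strictly prefers its truthful report iff there is no $b$ with $\varphi(b)=1$. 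The main obstacle, and the point requiring care, is getting the quantifier structure right: the SP condition quantifies universally over $t_{-i}$, so a violation witnessed by \emph{some} $t_{-i}$ corresponds exactly to \emph{existence} of a satisfying assignment — this is precisely why SP (a $\forall$ statement) is co-\textsf{NP} and non-SP is \textsf{NP}, and the reduction must exploit this alignment. I would set it up so that agent $i^\ast$ has two types $t_{i^\ast}^{\mathrm{good}},t_{i^\ast}^{\mathrm{bad}}$ with $\choice(t_{i^\ast}^{\mathrm{bad}},b)=1$ always, $\choice(t_{i^\ast}^{\mathrm{good}},b)=\varphi(b)$, and $u_{i^\ast}(t_{i^\ast}^{\mathrm{good}},\cdot)$ preferring outcome $1$; then type $t_{i^\ast}^{\mathrm{good}}$ has a profitable misreport to $t_{i^\ast}^{\mathrm{bad}}$ at some $b$ iff $\varphi(b)=0$ for that $b$, so $\choice$ is not SP iff $\varphi$ is falsifiable. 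Swapping the roles of outcomes gives the reduction from \textsf{SAT}. All circuits $C_{u_i},C_\choice$ have size polynomial in $|\varphi|$, and $\sum_i|T_i|=O(m)$, so the reduction is polynomial.

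Assembling these: co-\textsf{NP} membership plus \textsf{co-NP}-hardness (equivalently, \textsf{NP}-hardness of the non-SP problem via the \textsf{SAT} reduction above) yields co-\textsf{NP}-completeness. I expect the main obstacle to be purely a matter of careful bookkeeping in the reduction — ensuring the circuit encoding genuinely has polynomial size, that the unary type-size encoding is respected (so one cannot smuggle in exponentially many types), and that the direction of the preference/satisfiability correspondence is consistent. The membership direction is routine given the explicit certificate.
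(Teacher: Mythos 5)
Your proposal is correct and takes essentially the same route as the paper: the same certificate $(i,t_i,t_i',t_{-i})$ for co-\textsf{NP} membership, and the same SAT reduction in which each formula variable becomes a two-type agent and one pivotal agent's deviation outcome evaluates the formula on the other agents' reports (your ``swapped'' variant is literally the paper's construction). The only detail left to pin down is that all non-pivotal agents, and the pivotal agent's deviating type, must be given constant utility functions so that the designated deviation is the only possible SP violation --- the paper states this explicitly, and it falls under the bookkeeping you already flag.
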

\begin{proof}
  To see that deciding SP-implementability is in co-\textsf{NP}, consider that $\choice$ is not SP iff there exists some $t_i,t_i'\in T_i,t_{-i}\in T_{-i}$ such that
  \begin{equation}
    u_i(t_i,\choice(t_i,t_{-i}))<u_i(t_i,\choice(t_i',t_{-i})).
  \end{equation}
  Thus a proof that $\choice$ is not SP may consist of $(i,t_i,t_i',t_{-i})$, and the verification of the above inequality requires time $O(|C_{u_i}|+|C_{\choice}|)$.

  We use a reduction from Boolean satisfiability (SAT) to show that deciding whether a choice rule is SP is co-\textsf{NP}-complete. Fix a Boolean formula $C_{\text{form}}$, which is by definition also a circuit. Let $n$ be one plus the number of inputs to $C_{\text{form}}$. Let $T_i=\{t_i^0,t_i^1\}$ for all $i$ and $X=\{0,1\}$. Fix some agent $i$, and define $u_i(t_i^0,x)=x$ and $u_i(t_i^1,x)=0$ for $x\in X$. For all $j\in N\setminus\{i\}$, $t_j\in T_j$, and $x\in X$, let $u_j(t_j,x)=0$. For $t_{-i}\in T_{-i}$, let $c(t_i^0,t_{-i})=0$ and $c(t_i^1,t_{-i})=C_{\text{form}}(t_{-i})$, where $C_{\text{form}}(t_{-i})$ denotes the evaluation of $C_{\text{form}}$ on input $t_{-i}\in T_{-i}\cong\{0,1\}^{n-1}$.

  The construction of $(1^{|T_1|},\dots,1^{|T_n|},C_{u_1},\dots,C_{u_n},C_{\choice})$ requires time $O(C_{\text{form}})$, as each $u_i$ has has two input bits and one output bit and therefore may be constructed in constant time, while $C_{\choice}$ can be constructed in time $O(C_{\text{form}})$. Specifically, $C_{\choice}$ is given by conditioning the output of $C_{\text{form}}(t_{-i})$ on the value of $t_i$, which requires adding a constant number of gates to $C_{\text{form}}$.

  By definition, $\choice$ is not SP iff agent $i$ with type $t_i^0$ may ever benefit from a deviation, that is, if there exists some $t_{-i}\in T_{-i}$ for which $c(t_i^1,t_{-i})=1$. This equality holds iff $C_{\text{form}}(t_{-i})=1$, so $\choice$ is not SP iff $C_{\text{form}}$ is satisfiable. Thus deciding whether a choice rule is SP-implementable is co-\textsf{NP}-complete because SAT is \textsf{NP}-complete.
\end{proof}

Next, we apply the O-dag structure to show that the OSP-implementability decision problem is also co-\textsf{NP}-complete. The main idea in the proof is to use a non-singleton vertex with no child in the O-dag as a witness to non-OSP-implementability. This approach is justified by Corollary~\ref{corr:finite_rule_out}.

\begin{theorem}
  \label{thm:verosp}
  Under the circuit model, deciding whether a choice rule is OSP-implementable is co-\textsf{NP}-complete.
\end{theorem}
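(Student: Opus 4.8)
The plan is to prove co-\textsf{NP}-completeness by establishing membership in co-\textsf{NP} and hardness separately, mirroring the structure of the proof of \Cref{thm:versp} but replacing the ``profitable deviation'' witness with an ``O-dag vertex with no child'' witness, as justified by \Cref{corr:finite_rule_out}.

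For membership in co-\textsf{NP}, I would argue that, by \Cref{corr:finite_rule_out}, a choice rule $\choice$ fails to be OSP-implementable if and only if its O-dag contains a non-singleton vertex $T'$ with no child. A certificate of non-implementability therefore consists of such a vertex $T' = \prod_i T_i'$, specified by listing the elements of each $T_i' \subseteq T_i$; since the $|T_i|$ are given in unary, this certificate has size $\text{poly}(\sum_i |T_i|)$. The verifier must then check (i) that $T'$ is non-singleton, which is immediate, and (ii) that $\ans_i(T') = \emptyset$ for every $i$. For step (ii), the verifier runs (essentially) the algorithm of \Cref{lem:findchildren}: for each agent $i$ it builds the graph $G_i(T')$ whose edges encode the failure of the two inequalities in \Cref{def:odag}, evaluating $C_{u_i}$ and $C_{\choice}$ at each of the $|T_i'|^2 |T_{-i}'|$ relevant report profiles, and checks that $G_i(T')$ is connected. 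Each circuit evaluation costs $O(|C_{u_i}| + |C_{\choice}|)$ and the number of evaluations is $\text{poly}(\prod_i |T_i|) = \text{poly}(\sum_i|T_i|)$ many (again using the unary encoding), so the verifier runs in polynomial time. Hence non-OSP-implementability has short, efficiently checkable certificates, putting the problem in co-\textsf{NP}.

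For hardness, I would reuse \emph{exactly} the reduction from SAT constructed in the proof of \Cref{thm:versp}. Given a Boolean formula $C_{\text{form}}$ on $n-1$ inputs, that construction produces agents with two-element type sets, $X = \{0,1\}$, agent $i$ with $u_i(t_i^0, x) = x$ and $u_i(t_i^1, x) = 0$, all other agents indifferent, and $\choice(t_i^0, t_{-i}) = 0$, $\choice(t_i^1, t_{-i}) = C_{\text{form}}(t_{-i})$. The key observation is that in this instance OSP-implementability coincides with SP: the construction is already shown to make $\choice$ fail SP iff $C_{\text{form}}$ is satisfiable, so it suffices to show that $\choice$ is OSP-implementable iff $\choice$ is SP. One direction is the general fact (stated after \Cref{def:OSP}) that OSP-implementability implies SP. For the converse, suppose $C_{\text{form}}$ is unsatisfiable, so $\choice(t_i^1, t_{-i}) = 0 = \choice(t_i^0, t_{-i})$ for all $t_{-i}$; then $\choice$ is a constant function, hence trivially OSP-implementable (e.g.\ by the mechanism that asks no questions, or directly: the root vertex $T$ has the child $(\{t_i^0\}, T_{-i})$ since every outcome is $0$ and every agent is indifferent between all outcomes, and one checks inductively that every non-singleton vertex has a child, invoking \Cref{corr:finite_rule_out}). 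Therefore the same polynomial-time many-one reduction shows OSP-implementability is co-\textsf{NP}-hard, and combined with membership in co-\textsf{NP} it is co-\textsf{NP}-complete.

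The main obstacle, and the place requiring the most care, is the polynomial-time bound on the verifier in the co-\textsf{NP} membership argument: one must be explicit that the number of report profiles at which the circuits are evaluated is polynomial in $\sum_i|T_i|$ rather than in $\log$ of the type-set sizes, which is precisely why the unary encoding $1^{|T_i|}$ in \eqref{eq:encoding} is needed (as flagged in the paper's footnote). A secondary, more routine point is confirming that evaluating a Boolean circuit $C$ on a given input takes time $O(|C|)$ so that the per-profile cost is controlled. Everything else — the SAT reduction and the equivalence of SP and OSP-implementability on that particular family of instances — is either inherited verbatim from \Cref{thm:versp} or follows immediately from \Cref{corr:finite_rule_out} together with the degeneracy of the constructed choice rule.
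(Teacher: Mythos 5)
Your hardness argument is fine and matches the paper's (reuse the \Cref{thm:versp} reduction; on those instances SP and OSP-implementability coincide, which you justify correctly via the constancy of $\choice$ when $C_{\text{form}}$ is unsatisfiable). The gap is in your co-\textsf{NP} membership argument. You propose that the verifier, given only the vertex $T'$, reconstruct each graph $G_i(T')$ by evaluating the circuits at all $|T_i'|^2|T_{-i}'|$ relevant profiles, and you assert that the number of evaluations is $\mathrm{poly}(\prod_i|T_i|)=\mathrm{poly}(\sum_i|T_i|)$. That equality is false: with $n$ agents each having two types, $\prod_i|T_i|=2^n$ while $\sum_i|T_i|=2n$, and the input \eqref{eq:encoding} has size $\Theta(\sum_i|T_i|)$ plus the circuit sizes. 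The whole point of the circuit model is that the input is exponentially smaller than the table, so the $O((\sum_i|T_i'|)\prod_i|T_i'|)$ algorithm of \Cref{lem:findchildren} is no longer polynomial time here. Indeed, even deciding whether a single pair $\{t_i,t_i'\}$ is an edge of $G_i(T')$ requires a min and a max over the exponentially large set $T'_{-i}$, so the verifier cannot determine the edge set of $G_i(T')$ at all within polynomial time.

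The fix, which is what the paper does, is to enrich the certificate: in addition to the childless non-singleton vertex $T'$, the prover supplies for each $i$ a set $H_i$ of $|T_i'|-1$ tuples $(t,t')\in {T'}^2$ such that the pairs $\{t_i,t_i'\}$ form a spanning tree of $G_i(T')$ and each tuple explicitly exhibits the profiles $t_{-i},t'_{-i}$ witnessing one of the inequalities in~(\ref{eq:edgecondition}). Connectivity of $G_i(T')$ (equivalently $\ans_i(T')=\emptyset$) is an existential statement — there exist a spanning tree and per-edge witnesses — so it can be certified, and the verifier then needs only $O(|T_i'|)$ circuit evaluations per agent rather than the full edge set. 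Without this extra certificate data your verifier does not run in polynomial time, and the membership claim is unproven.
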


To place \Cref{thm:verosp} in context, first observe that without the results in this paper, basic reasoning only places the OSP-implementability decision problem under the circuit model in the larger complexity class \textsf{NEXPTIME}. Specifically, the number of terminal vertices in an extensive-form mechanism for $\choice$ can equal $\prod_i|T_i|$, which can be exponentially large in the input size, for instance if all type sets $T_i$ and circuits $C_{u_i}$ and $C_{\choice}$ have constant size as $n$ grows. It then requires exponential time in the size of the circuit encoding~(\ref{eq:encoding}) to verify if a given extensive-form mechanism for $\choice$ is OSP, so the OSP-implementability decision problem lies in \textsf{NEXPTIME}.

Our results from \Cref{sec:polytime} imply that deciding OSP-implementability under the circuit model in fact lies in the smaller class $\textsf{PSPACE}\subseteq\textsf{NEXPTIME}$, as \Cref{alg:niceconn} for verifying whether a choice rule is OSP-implementable requires space growing polynomially in the size of the circuit encoding~(\ref{eq:encoding}).

\Cref{thm:verosp} shows the stronger\footnote{It is well known that $\text{co-\textsf{NP}}\subseteq\textsf{PSPACE}$, and it is widely conjectured that this inclusion is strict.} result that deciding OSP-implementability under the circuit model is co-\textsf{NP}-complete, thereby completely resolving the problem's complexity. Note that in contrast to the reasoning above for \textsf{NEXPTIME}, which gave an exponential-time algorithm for verifying that a given implementation of $\choice$ is OSP, we prove \Cref{thm:verosp} by giving a polynomial-time algorithm for verifying that \textit{no} implementation of $\choice$ is OSP.

\begin{proof}
  First we show that deciding OSP-implementability is in co-\textsf{NP}. By Corollary~\ref{corr:finite_rule_out}, in order to show that $\choice$ is not OSP-implementable, it is sufficient to present a non-singleton set $T'=T_1'\times\cdots\times T_n'\subseteq T$, along with a proof that $T'$ has no child in the O-dag of $\choice$. By definition $T'$ has no child iff for all $i$, the graph $G_i(T')$ from the proof of Lemma~\ref{lem:findchildren} is connected, or equivalently, $G_i(T')$ has a spanning tree. Recall that $G_i(T')$ has vertex set $T_i'$, and a pair $\{t_i,t_i'\}\subseteq T_i'$ forms an edge in $G_i(T')$ iff there exist some $t_{-i},t_{-i}'\in T_{-i}'$ such that one of the following inequalities holds:
  \begin{align}
    \label{eq:edgecondition}
    u_i(t_i,\choice(t_i,t_{-i})) &< u_i(t_i,\choice(t_i',t_{-i}')) \\
    u_i(t_i',\choice(t_i',t_{-i}')) &< u_i(t_i',\choice(t_i,t_{-i})).
  \end{align}
  Thus a proof that $\choice$ is not OSP-implementable may consist of the following:
  \begin{itemize}
  \item A vertex $T'$ with no child in the O-dag of $\choice$
  \item For each $i$, a set $H_i$ containing $|T_i'|-1$ tuples $(t,t')\in{T'}^2$ such that:
    \begin{enumerate}
    \item\label{it:span} $\{\{t_i,t_i'\}:(t,t')\in H_i\}$ forms a spanning tree of $G_i(T')$
    \item\label{it:edge} Each $(t,t')\in H_i$ satisfies at least one of the inequalities in~(\ref{eq:edgecondition}).
    \end{enumerate}
  \end{itemize}
  To verify the proof that each $G_i(T')$ is connected, it is sufficient to verify for each $i$ that $\{\{t_i,t_i'\}:(t,t')\in H_i\}$ forms a tree over the vertex set $T_i'$, and that each $(t,t')\in H_i$ satisfies at least one inequality in~(\ref{eq:edgecondition}). These verifications may be done in time $O(|H_i|)$ and $O(|T_i|(|C_{u_i}|+|C_{\choice}|))$ time respectively, where $|H_i|=(|T_i'|-1)\cdot 2\sum_i\lceil\log|T_i|\rceil$ is the number of bits to store $H_i$. The circuit $C_{\choice}$ takes as input $t\in T$, so its size $|C_{\choice}|$ is at least the number of input bits $\sum_i\lceil\log|T_i|\rceil$, and therefore $|H_i|=O(|T_i||C_{\choice}|)$. Thus the verification for all $i$ runs in time
  \begin{equation}
    \sum_iO(|H_i|+|T_i|(|C_{u_i}|+|C_{\choice}|))=O\left(\sum_i|T_i|(|C_{u_i}|+|C_{\choice}|)\right),
  \end{equation}
  which is at most quadratic in $|(1^{|T_1|},\dots,1^{|T_n|},C_{u_1},\dots,C_{u_n},C_{\choice})|$. Therefore deciding OSP-implementability is in co-\textsf{NP}.

  In the proof of \Cref{thm:versp}, we reduced SAT to the problem of verifying that a choice rule is not SP.  This may also be used to reduce SAT to verifying that no OSP mechanism exists for a choice rule. In particular, all choice rules used in the reduction have at most one agent $i$ for which $u_i$ is non-constant.  If such a choice rule is SP, then any mechanism in which all agents in $N\setminus\{i\}$ report their types and finally $i$ reports his type is OSP.  Hence, such a choice rule is SP iff it is OSP-implementable. Thus, deciding OSP-implementability is co-\textsf{NP}-complete.
\end{proof}

The preceding results suggest that SP and OSP are computationally comparable.  When inputs are expressed as a table, the SP and OSP decision problems are both in \textsf{P}, while when inputs are expressed as circuits, both decision problems are co-\textsf{NP}-complete.  Even though OSP depends on the extensive form of the mechanism, this does not seem to substantially raise computational complexity.

\section{Concluding remarks}

We have studied the problem of deciding whether an arbitrary choice rule has an associated OSP mechanism. Another approach is to make substantive assumptions about agents' preferences, and then to investigate which choice rules have OSP mechanisms.  This approach has been pursued successfully in a variety of settings.

These two approaches are distinct and complementary.  The algorithm presented above determines whether a single choice rule is OSP-implementable.  Of course, there are important questions that this does not directly answer.  For instance, given some class of choice rules that satisfy other desiderata, is every rule in the class OSP-implementable \citep{ashlagi_stable_2018, arribillaga2019all}?  As another example, if we restrict attention to some economic setting, do the OSP mechanisms have a special structure, such as clock auctions \citep{li_obviously_2017} or millipede games \citep{pycia_theory_2019}?

The present results might be of interest to economists for several reasons. First, we have proposed algorithms that enable researchers to rapidly check cases, testing and refining conjectures.  Second, explicit computational models can be a metaphor for analytic tractability. From that perspective, our results suggest that extensive-form mechanism design, despite breaking the revelation principle, need not come at a large cost in tractability. Third, the characterization of OSP-implementable choice rules in \Cref{sec:char_OSP} allows us to reduce claims about the existence of a desirable mechanism in a combinatorially large class to claims about the O-Dag of a choice rule. Because the O-Dag has a tractable structure, this equivalence can be a stepping stone to new theorems.

\bibliographystyle{ecta}
\bibliography{library}

\begin{thebibliography}{30}
\newcommand{\enquote}[1]{``#1''}
\expandafter\ifx\csname natexlab\endcsname\relax\def\natexlab#1{#1}\fi

\bibitem[\protect\citeauthoryear{Akbarpour, Kominers, Li, Li, and
  Milgrom}{Akbarpour et~al.}{2022}]{akbarpour2022investment}
\textsc{Akbarpour, M., S.~D. Kominers, K.~M. Li, S.~Li, and P.~R. Milgrom}
  (2022): \enquote{Investment incentives in truthful approximation mechanisms,}
  \emph{working paper}.

\bibitem[\protect\citeauthoryear{Arribillaga, Mass{\'o}, and Neme}{Arribillaga
  et~al.}{2020{\natexlab{a}}}]{arribillaga2019all}
\textsc{Arribillaga, R.~P., J.~Mass{\'o}, and A.~Neme} (2020{\natexlab{a}}):
  \enquote{All sequential allotment rules are obviously strategy-proof,} .

\bibitem[\protect\citeauthoryear{Arribillaga, Massó, and Neme}{Arribillaga
  et~al.}{2020{\natexlab{b}}}]{arribillaga_obvious_2020}
\textsc{Arribillaga, R.~P., J.~Massó, and A.~Neme} (2020{\natexlab{b}}):
  \enquote{On obvious strategy-proofness and single-peakedness,} \emph{Journal
  of Economic Theory}, 186, 104992.

\bibitem[\protect\citeauthoryear{Ashlagi and Gonczarowski}{Ashlagi and
  Gonczarowski}{2018}]{ashlagi_stable_2018}
\textsc{Ashlagi, I. and Y.~A. Gonczarowski} (2018): \enquote{Stable matching
  mechanisms are not obviously strategy-proof,} \emph{Journal of Economic
  Theory}, 177, 405--425.

\bibitem[\protect\citeauthoryear{Bade}{Bade}{2019}]{bade2019matching}
\textsc{Bade, S.} (2019): \enquote{Matching with single-peaked preferences,}
  \emph{Journal of Economic Theory}, 180, 81--99.

\bibitem[\protect\citeauthoryear{Bade and Gonczarowski}{Bade and
  Gonczarowski}{2017}]{bade_gibbard-satterthwaite_2017}
\textsc{Bade, S. and Y.~A. Gonczarowski} (2017):
  \enquote{Gibbard-{Satterthwaite} {Success} {Stories} and {Obvious}
  {Strategyproofness},} in \emph{Proceedings of the 2017 {ACM} {Conference} on
  {Economics} and {Computation}}, New York, NY, USA: ACM, {EC} '17, 565--565,
  event-place: Cambridge, Massachusetts, USA.

\bibitem[\protect\citeauthoryear{B{\'o} and Hakimov}{B{\'o} and
  Hakimov}{2020{\natexlab{a}}}]{bo2020iterative}
\textsc{B{\'o}, I. and R.~Hakimov} (2020{\natexlab{a}}): \enquote{Iterative
  versus standard deferred acceptance: Experimental evidence,} \emph{The
  Economic Journal}, 130, 356--392.

\bibitem[\protect\citeauthoryear{B{\'o} and Hakimov}{B{\'o} and
  Hakimov}{2020{\natexlab{b}}}]{bo2020pick}
---\hspace{-.1pt}---\hspace{-.1pt}--- (2020{\natexlab{b}}):
  \enquote{Pick-an-object Mechanisms,} \emph{Available at SSRN}.

\bibitem[\protect\citeauthoryear{Camara}{Camara}{2021}]{camara2021computationally}
\textsc{Camara, M.~K.} (2021): \enquote{Computationally Tractable Choice,}
  \emph{working paper}.

\bibitem[\protect\citeauthoryear{Charness and Levin}{Charness and
  Levin}{2009}]{charness2009origin}
\textsc{Charness, G. and D.~Levin} (2009): \enquote{The origin of the winner's
  curse: a laboratory study,} \emph{American Economic Journal: Microeconomics},
  1, 207--36.

\bibitem[\protect\citeauthoryear{Esponda and Vespa}{Esponda and
  Vespa}{2014}]{esponda2014hypothetical}
\textsc{Esponda, I. and E.~Vespa} (2014): \enquote{Hypothetical thinking and
  information extraction in the laboratory,} \emph{American Economic Journal:
  Microeconomics}, 6, 180--202.

\bibitem[\protect\citeauthoryear{Esponda and Vespa}{Esponda and
  Vespa}{2016}]{esponda2016contingent}
---\hspace{-.1pt}---\hspace{-.1pt}--- (2016): \enquote{Contingent preferences
  and the sure-thing principle: Revisiting classic anomalies in the
  laboratory,} \emph{working paper}.

\bibitem[\protect\citeauthoryear{Gibbard}{Gibbard}{1973}]{gibbard1973manipulation}
\textsc{Gibbard, A.} (1973): \enquote{Manipulation of voting schemes: a general
  result,} \emph{Econometrica}, 587--601.

\bibitem[\protect\citeauthoryear{Hartline and Lucier}{Hartline and
  Lucier}{2015}]{hartline2015non}
\textsc{Hartline, J.~D. and B.~Lucier} (2015): \enquote{Non-optimal mechanism
  design,} \emph{American Economic Review}, 105, 3102--24.

\bibitem[\protect\citeauthoryear{Hopcroft and Tarjan}{Hopcroft and
  Tarjan}{1973}]{hopcroft1973algorithm}
\textsc{Hopcroft, J. and R.~Tarjan} (1973): \enquote{Algorithm 447: efficient
  algorithms for graph manipulation,} \emph{Communications of the ACM}, 16,
  372--378.

\bibitem[\protect\citeauthoryear{Kagel, Harstad, and Levin}{Kagel
  et~al.}{1987}]{kagel1987information}
\textsc{Kagel, J.~H., R.~M. Harstad, and D.~Levin} (1987): \enquote{Information
  impact and allocation rules in auctions with affiliated private values: A
  laboratory study,} \emph{Econometrica}, 1275--1304.

\bibitem[\protect\citeauthoryear{Kagel and Levin}{Kagel and
  Levin}{2001}]{kagel2001behavior}
\textsc{Kagel, J.~H. and D.~Levin} (2001): \enquote{Behavior in multi-unit
  demand auctions: experiments with uniform price and dynamic Vickrey
  auctions,} \emph{Econometrica}, 69, 413--454.

\bibitem[\protect\citeauthoryear{Leyton-Brown, Milgrom, and Segal}{Leyton-Brown
  et~al.}{2017}]{leyton2017economics}
\textsc{Leyton-Brown, K., P.~Milgrom, and I.~Segal} (2017): \enquote{Economics
  and computer science of a radio spectrum reallocation,} \emph{Proceedings of
  the National Academy of Sciences}, 114, 7202--7209.

\bibitem[\protect\citeauthoryear{Li}{Li}{2017}]{li_obviously_2017}
\textsc{Li, S.} (2017): \enquote{Obviously {Strategy}-{Proof} {Mechanisms},}
  \emph{American Economic Review}, 107, 3257--3287.

\bibitem[\protect\citeauthoryear{Mackenzie}{Mackenzie}{2020}]{mackenzie2020revelation}
\textsc{Mackenzie, A.} (2020): \enquote{A revelation principle for obviously
  strategy-proof implementation,} \emph{Games and Economic Behavior}.

\bibitem[\protect\citeauthoryear{Mandal and Roy}{Mandal and
  Roy}{2020}]{mandal2020obviously}
\textsc{Mandal, P. and S.~Roy} (2020): \enquote{Obviously Strategy-proof
  Implementation of Assignment Rules: A New Characterization,} .

\bibitem[\protect\citeauthoryear{Mart{\'\i}nez-Marquina, Niederle, and
  Vespa}{Mart{\'\i}nez-Marquina et~al.}{2019}]{martinez2019failures}
\textsc{Mart{\'\i}nez-Marquina, A., M.~Niederle, and E.~Vespa} (2019):
  \enquote{Failures in contingent reasoning: The role of uncertainty,}
  \emph{American Economic Review}, 109, 3437--74.

\bibitem[\protect\citeauthoryear{McGee and Levin}{McGee and
  Levin}{2019}]{mcgee2019obvious}
\textsc{McGee, P. and D.~Levin} (2019): \enquote{How obvious is the dominant
  strategy in an English Auction? Experimental evidence,} \emph{Journal of
  Economic Behavior \& Organization}, 159, 355--365.

\bibitem[\protect\citeauthoryear{Mu'Alem and Nisan}{Mu'Alem and
  Nisan}{2008}]{mu2008truthful}
\textsc{Mu'Alem, A. and N.~Nisan} (2008): \enquote{Truthful approximation
  mechanisms for restricted combinatorial auctions,} \emph{Games and Economic
  Behavior}, 64, 612--631.

\bibitem[\protect\citeauthoryear{Myerson}{Myerson}{1979}]{myerson1979incentive}
\textsc{Myerson, R.~B.} (1979): \enquote{Incentive compatibility and the
  bargaining problem,} \emph{Econometrica}, 61--73.

\bibitem[\protect\citeauthoryear{Pycia and Troyan}{Pycia and
  Troyan}{2021}]{pycia_theory_2019}
\textsc{Pycia, M. and P.~Troyan} (2021): \enquote{A {Theory} of {Simplicity} in
  {Games} and {Mechanism} {Design},} \emph{working paper}, 71.

\bibitem[\protect\citeauthoryear{Schneider and Porter}{Schneider and
  Porter}{2020}]{schneider2020effects}
\textsc{Schneider, M. and D.~Porter} (2020): \enquote{Effects of experience,
  choice architecture, and cognitive reflection in strategyproof mechanisms,}
  \emph{Journal of Economic Behavior \& Organization}, 171, 361--377.

\bibitem[\protect\citeauthoryear{Thomas}{Thomas}{2020}]{thomas2020classification}
\textsc{Thomas, C.} (2020): \enquote{Classification of Priorities Such That
  Deferred Acceptance is Obviously Strategyproof,} \emph{arXiv preprint
  arXiv:2011.12367}.

\bibitem[\protect\citeauthoryear{Troyan}{Troyan}{2019}]{troyan_obviously_2019}
\textsc{Troyan, P.} (2019): \enquote{Obviously {Strategy}‐{Proof}
  {Implementation} of {Top} {Trading} {Cycles},} \emph{International Economic
  Review}, 60, 1249--1261.

\bibitem[\protect\citeauthoryear{Zhang and Levin}{Zhang and
  Levin}{2020}]{ZL2020}
\textsc{Zhang, L. and D.~Levin} (2020): \enquote{Partition Obvious Preference
  and Mechanism Design: Theory and Experiment,} Mimeo, Department of Economics,
  Ohio State University.

\end{thebibliography}

\appendix

\section{Proof of Theorems \ref{thm:anon_decide} and \ref{thm:anon_construct}}\label{app:anon_proofs}

To prove \Cref{thm:anon_decide}, it will be helpful to also use another class of histograms, defined below. We first introduce the following notion of a vertex collection.

\begin{definition}
  Fix a choice rule $\choice$. For $i\in N$, let $\cT_i=2^{T_i}\setminus\{\emptyset\}$, and let $\cT=\prod_i\cT_i$ be the collection of all vertices in the O-dag of $\choice$. Let a \textbf{vertex collection} in the O-dag of $\choice$ be any set $\cT'$ of O-dag vertices such that $\cT'=\prod_i\cT_i'$ for some subsets $\cT_i'\subseteq\cT_i$ for $i\in N$.
\end{definition}

\begin{definition}
  Assume that $T_1=\cdots=T_n$. For a vertex $T'$ in the O-dag of $\choice$, the \textbf{histogram of $T'$} is the tuple $H(T')\in\{0,\dots,n\}^{\cT_i}$ defined by
  \begin{equation}
    H(T')_{A_i} = |\{j\in N:T_j'=A_i\}|.
  \end{equation}
\end{definition}

Below, it will once again be helpful to consider the image $H(\cT')=\{H(T'):T'\in\cT'\}$ of $H$ under vertex collections $\cT'$, and also to consider histograms of profiles $T_{-i}'$ that exclude an agent $i$.

\begin{algorithm}
  \SetKwFunction{GA}{GetAnswers} 
  \SetKwFunction{CH}{ComputeHistograms}
  \SetKwProg{Fn}{Function}{}{}
  \Fn{\CH{$T'$}}{
    Set $f_\emptyset:h(T'_\emptyset)=\{(0,\dots,0)\}\rightarrow\{0,1\}$ to ouput $1$ \\
    \For{$i\in\{1,\dots,n\}$}{
      Initialize $f_{[i]}:h(T_{[i]})\rightarrow\{0,1\}$ to always ouput $0$ \\
      \For{$t_i\in T_i'$}{
        Initialize $h_{[i-1]}\gets(i-1,0,\dots,0)\in h(T_{[i-1]})$ \\
        \For{$h_{[i]}\in h(T_{[i]})$ (traversed lexicographically)}{
          \If{$(h_{[i]})_{t_i}\geq 1$}{
            \If{$f_{[i-1]}(h_{[i-1]})=1$}{
              Set $f_{[i]}(h_{[i]})=1$
            }
            Lexicographically increment $h_{[i-1]}$
          }
        }
      }
    }
  }
  \Return{$f_{[n]}$}
  \caption{\label{alg:computehistograms} Compute the set of histograms $h(T')$.}
\end{algorithm}

\begin{lemma}
  \label{lem:computehist}
  For every vertex $T'$ in the O-dag of an anonymous choice rule $\choice$, there exists a $O\left(\sum_i|T_i'||h(T)|\right)$-time algorithm that computes the set $h(T')$.
\end{lemma}
\begin{proof}
  The desired algorithm uses a standard technique to inductively construct the set $h(T')$, and is given in Algorithm~\ref{alg:computehistograms}. In this algorithm, the notation $T_I$ or $T'_I$ refers to the restriction of the respective set of type profiles to agents in $I\subseteq N$. Furthermore, we impose an arbitrary order on the elements of $T_i$ so that histograms have a well defined lexicographic order.

  The algorithm inductively computes the indicator function $f_{[i]}:h(T_{[i]})\rightarrow\{0,1\}$ for the set $h(T'_{[i]})\subseteq h(T_{[i]})$. For the base case, restricting to $N=\emptyset$, then the only possible histogram is $(0,\dots,0)\in\{0\}^{T_i}$.

  For the inductive step, for $1\leq i\leq n$, assume that we have computed the set $h(T_{[i-1]}')$ of histograms relative to $T'$ when restricting to agents in $[i-1]$. Algorithm~\ref{alg:computehistograms} initializes $f_{[i]}:h(T_{[i]})\rightarrow\{0,1\}$ to always output $0$, and then flips the appropriate entries to $1$ as it computes the elements of $h(T'_{[i]})$. Specifically, for each $t_i\in T_i'$, the algorithm loops through every $h_{[i]}\in h(T_{[i]})$, and if $(h_{[i]})_{t_i}\geq 1$, the algorithm sets $h_{[i-1]}\in h(T_{[i-1]})$ to be the histogram $h_{[i]}$ with the $t_i$-entry decreased by $1$. Then if $h_{[i-1]}\in h(T'_{[i-1]})$, then we must have $h_{[i]}\in h(T'_{[i]})$, so the algorithm sets $f_{[i]}(h_{[i]})=1$. Note that conversely, if every $h_{[i-1]}\in h(T_{[i-1]})$ obtained by decrementing one component of $h_{[i]}$ lies outside of $h(T'_{[i-1]})$, then $h_{[i]}$ must lie outside of $h(T'_{[i]})$. Thus after the algorithm terminates, the function $f_{[i]}$ indeed gives the indicator function for $h(T'_{[i]})$, so in particular, the algorithm returns the indicator function $f_{[n]}$ for $h(T')$.

  It remains to analyze the running time of Algorithm~\ref{alg:computehistograms}. The functions $f_{[i]}:h(T_{[i]})\rightarrow\{0,1\}$ can be stored as length-$|h(T_{[i]})|$ tuples of bits, ordered lexicographically, as we only ever access the values in lexicographic order. A histogram can be lexicographically incremented in constant time, for instance by maintaining a stack (array) containing the indices of the histogram's nonzero components; each increment operation only needs to access at most three nonzero components of the histogram. Thus every instruction in the innermost loop of Algorithm~\ref{alg:computehistograms} takes constant time, so the overall running time is $O(\sum_{i\in N}|T_i'||h(T_{[i]})|) = O(\sum_{i\in N}|T_i'||h(T)|)$.
\end{proof}

If we replace each type set $T_i$ in \Cref{lem:computehist} with $\cT_i$, we immediately obtain the following corollary.

\begin{corollary}
  \label{corr:computehist}
  For every vertex collection $\cT'$ in the O-dag of an anonymous choice rule $\choice$, there exists a $O(\sum_i|\cT'_i||H(\cT)|)$-time algorithm that computes the set $H(\cT')$.
\end{corollary}

The following analogue of \Cref{lem:findchildren} applies \Cref{lem:computehist} to efficiently compute the children of a given vertex in the O-dag.

\begin{lemma}
  \label{lem:anon_findchildren}
  Let $\choice$ be anonymous, and let $T'$ be an arbitrary vertex in the O-dag of $\choice$. There exists a $O((|T_i'|^2+\sum_{j\neq i}|T_j'|)|h(T_{-i})|)$-time algorithm\footnote{This algorithm is assumed to have random access to the table mapping type-histogram tuples to real numbers that defines $\choice$. The algorithm only accesses utility values for agent $i$ with types in $T_i'$, and hence the running time of this algorithm may be smaller than the size $|\choice|_{\anon}$ of the entire table.} $\GA{$i,T_i',H(T_{-i}')$}$ that takes as input an agent $i\in N$, the set $T_i'$, and the histogram\footnote{The histogram $H(T_{-i}')$ specifies the multiset $\{T_j':j\neq i\}$ up to permutations of $N\setminus\{i\}$, so in particular the sum $\sum_{j\neq i}|T_j'|$ that appears in the running time bound can be determined as a function of $H(T_{-i}')$.} $H(T_{-i}')$, and outputs $\emptyset$ if $\ans_i(T')$ is empty. If $\ans_i(T')$ is nonempty, then the algorithm outputs the partition $\minans_i(T')$.
\end{lemma}
\begin{proof}
  The proof is similar to that of \Cref{lem:findchildren}, although more care is needed to ensure efficiency in the anonymous case.

  We will again construct an undirected graph $G_i(T')$ with vertex set $T'_i$ and edge set containing all pairs $\{t_i',t_i''\}$ such that
  \begin{equation}
    \min_{t'_{-i} \in T'_{-i}} u_i(t'_i,\choice(t_i',t'_{-i})) < \max_{t'_{-i} \in T'_{-i}} u_i(t'_i,\choice(t_i'',t'_{-i}))
  \end{equation}
  or
  \begin{equation}
    \min_{t'_{-i} \in T'_{-i}} u_i(t''_i,\choice(t_i'',t'_{-i})) < \max_{t'_{-i} \in T'_{-i}} u_i(t''_i,\choice(t'_i,t'_{-i}))
  \end{equation}
  That is, if $G_i(T')$ has an edge $\{t_i',t_i''\}$, then any child of $T'$ in the O-dag of $\choice$ contains either none or both of $t_i',t_i''$.

  The anonymity of $\choice$ implies that the inputs $T_i',H(T_{-i}')$ together contain enough information to compute the minimizations and maximizations above, even though we are not explicitly provided the vertex $T'$. Specifically, the algorithm may choose $T'$ to be an arbitrary vertex in the O-dag such that $T_i'$ and $H(T_{-i}')$ are as specified in the input. This chosen vertex $T'$ agrees with the true vertex $T'$ up to permutations of the agents $N\setminus\{i\}$; such permutations are irrelevant for purpose of this algorithm.

  We now show how to compute the minimizations and maximizations above. First, the set $h(T_{-i}')$ can be computed in time $O(\sum_{j\neq i}|T_j'||h(T_{-i})|)$ by \Cref{lem:computehist}. Now for all $t_i',t_i''\in T_i'$, the set $h(T_{-i}')$ can be used to compute each desired minimization $\min_{t_{-i}'\in T_{-i}'}u_i(t_i',\choice(t_i',t_{-i}'))$ and maximization $\max_{t_{-i}'\in T_{-i}'}u_i(t_i',\choice(t_i'',t_{-i}'))$, by walking through the provided input table for $\choice$ mapping histograms in $h(T_{-i})$ to utility values for agent $i$ under true type $t_i'$ and reported type $t_i'$ or $t_i''$. Therefore each individual minimization or maximization takes time $O(|h(T_{-i})|)$. To compute $G_i(T')$, we perform one minimization for each $t_i'\in T_i'$, and one maximization for each pair $(t_i',t_i'')\in {T_i'}^2$ such that $t_i'\neq t_i''$. Therefore excluding the preprocessing, we perform a total of $|T_i'|^2$ minimizations or maximizations, which takes $O(|T_i'|^2|h(T_{-i})|)$ time. When we add in the $O(\sum_{j\neq i}|T_j'||h(T_{-i})|)$ preprocessing time used to compute $h(T_{-i}')$, we obtain an overall running time of $O(|T_i'|^2|h(T_{-i})| + \sum_{j\neq i}|T_j'||h(T_{-i})|) = O((|T_i'|^2+\sum_{j\neq i}|T_j'|)|h(T_{-i})|)$.

  As shown in the proof of \Cref{lem:findchildren}, once $G_i(T')$ is constructed, then its connected components can be computed in $O(|T_i'|^2)$ time. Furthermore, if $\ans_i(T')=\emptyset$, then $G_i(T')$ will have a single connected component, while if $\ans_i(T')\neq\emptyset$, then the connected components of $G_i(T')$ give the partition $\minans_i(T')$. The total running time of this algorithm is $O((|T_i'|^2+\sum_{j\neq i}|T_j'|)||h(T_{-i})|)$.
\end{proof}

We are now ready to prove \Cref{thm:anon_construct}.

\begin{proof}[Proof of \Cref{thm:anon_construct}]
  By the definition of the O-dag, an OSP mechanism for $\choice$ is given as follows. Let the extensive-form mechanism have the same vertex set $\cT$ as the O-dag of $\choice$. For every vertex $v=T^v\in\cT$, let $P(v)$ be the least $i\in N$ such that $\ans_i(T^v)\neq\emptyset$, and let $\{T_{P(v)}^e\}_{e\in\out(v)}$ be the partition of $T_{P(v)}^v$ given by $\minans_i(T^v)$.

  Thus to compute $P(v)$ and $\out(v)$ for a given vertex $v$, it suffices to determine for each $i\in N$ whether $\ans_i(T^v)$ is nonempty, and if so, then to compute $\minans_i(T^v)$. This computation may be performed by applying \Cref{lem:anon_findchildren} to each $i\in N$, which gives a total running time of
  \begin{equation}
    O\left(\sum_i\left(|T_i'|^2+\sum_{j\neq i}|T_j'|\right)|h(T_{-i})|\right) = O\left(n|T_i|(n+|T_i|){n+|T_i|-2 \choose |T_i|-1}\right).
  \end{equation}
\end{proof}

\begin{algorithm}
  \SetKwFunction{ADO}{AnonDetermineOSP}
  \SetKwProg{Fn}{Function}{}{}
  \Fn{\ADO{$\choice$}}{
    \For{$H_N\in H(\cT)\subseteq\{0,\dots,n\}^{\cT_i}$}{
      \If{$H_N$ not entirely supported on singletons}{
        \For{$i\in N$}{
          Initialize $\cB_i\gets\emptyset$ \\
          \For{$T_i'$ in the support of $H_N$}{
            Let $H_{-i}$ equal $H_N$ with $1$ subtracted from the $T_i$-entry \\
            \If{$\GA{$i,T_i',H_{-i}$}=\emptyset$ (see \Cref{lem:anon_findchildren})}{
              Add $T_i'$ to $\cB_i$
            }
          }
        }
        \If{$H_N\in H(\cB)$ (compute using \Cref{corr:computehist})}{
          \Return{False}
        }
      }
    }
    \Return{True}
  }
  \caption{\label{alg:anonOSP} Determine if an anonymous choice rule $\choice$ is OSP-implementable.}
\end{algorithm}

We next prove \Cref{thm:anon_decide}.

\begin{proof}[Proof of \Cref{thm:anon_decide}]
  \Cref{alg:anonOSP} provides the desired algorithm that returns True iff $\choice$ is OSP-implementable. We will first show the correctness of this algorithm. By \Cref{corr:finite_rule_out}, $\choice$ is OSP-implementable iff every vertex in the O-dag of $\choice$ has a child. As the number of O-dag vertices grows exponentially in $n$, it is not feasible to check each vertex individually for children. Instead, \Cref{alg:anonOSP} loops through all histograms $H_N\in H(\cT)$, and efficiently checks whether there exists some vertex $T'$ with $H(T')=H_N$ such that $T'$ has no children in the O-dag.

  Formally, for each histogram $H_N\in H(\cT)$ that is not entirely supported on singletons, \Cref{alg:anonOSP} computes a vertex collection $\cB=\prod_i\cB_i$ of ``bad'' vertices. Specifically, $\cB_i\subseteq\cT_i$ contains all $T_i'\in\cT_i$ such that for any $T_{-i}'$ with $H(T_i',T_{-i}')=H_N$, then $\cA_i(T_i',T_{-i}')=\emptyset$ (recall that by anonymity, the choice of $T_{-i}'$ does not affect $\cA_i(T_i',T_{-i}')$ as long as $H(T_i',T_{-i}')=H_N$).

  Thus if a non-singleton vertex $T'$ with $H(T')=H_N$ has no children in the O-dag, then $T_i'\in\cB_i$ for all $i$, that is, $T'\in\cB$, so that $H(T')=H_N\in H(\cB)$, and \Cref{alg:anonOSP} will return \textit{False}. Conversely, if \Cref{alg:anonOSP} returns \textit{False}, then for some $H_N\in H(\cT)$ with associated ``bad'' vertex collection $\cB$, we have $H_N\in H(\cB)$, that is, some $T'\in\cB$ has $H(T')=H_N$, and therefore $T'$ has no children in the O-dag. Furthermore, as $H_N$ is not entirely supported on singletons, $T'$ is non-singleton.

  Therefore we have shown that \Cref{alg:anonOSP} returns \textit{False} iff there exists a non-singleton vertex $T'$ in the O-dag with no children, which by \Cref{corr:finite_rule_out} holds iff $\choice$ is not OSP-implementable.

  We now analyze the running time of \Cref{alg:anonOSP}. For each $H_N\in H(\cT)$, \Cref{alg:anonOSP} first runs \GA{$i,T_i',H_{-i}$} for all $i\in N$ and all $T_i'\in\supp(H_N)\subseteq\cT_i$, and then computes $H(\cB)$ using \Cref{corr:computehist}. Each call to \GA{$i,T_i',H_{-i}$} requires $O((|T_i'|^2+\sum_j|T_j'|)|h(T)|)$ time by \Cref{lem:anon_findchildren}, where $T_{-i}'$ is chosen arbitrarily such that $H(T_i',T_{-i}')=H_N$. Each computation of $H(\cB)$ requires $O(\sum_i|\cB'_i||H(\cT)|)$ time by \Cref{corr:computehist}. Thus the overall running time of \Cref{alg:anonOSP} is
  \begin{align}
    \hspace{1em}&\hspace{-1em} O\left(|H(\cT)| \left(\sum_i|\cT_i|(|T_i|^2+\sum_{j\neq i}|T_j|)|h(T_{-i})| + \sum_i|\cT_i||H(\cT)|\right) \right) \\
                &= O(|H(\cT)| (n|\cT_i|(|T_i|^2+n|T_i|)|h(T_{-i})| + n|\cT_i||H(\cT)|).
  \end{align}
  By definition $|h(T_{-i})|={n+|T_i|-2 \choose |T_i|-1}$, $|\cT_i|=2^{|T_i|}-1$, and $|H(\cT)|={n+2^{|T_i|}-2 \choose 2^{|T_i|}-2}$, so the above expression for running time becomes
  \begin{align}
    \hspace{1em}&\hspace{-1em} O\left({n+2^{|T_i|}-2 \choose 2^{|T_i|}-2}n2^{|T_i|} \left(|T_i|(|T_i|+n){n+|T_i|-2 \choose |T_i|-1} + {n+2^{|T_i|}-2 \choose 2^{|T_i|}-2}\right)\right) \\
                &= O\left(n2^{|T_i|}{n+2^{|T_i|}-2 \choose 2^{|T_i|}-2}^2\right),
  \end{align}
  where the equality above holds because $|T_i|(|T_i|+n){n+|T_i|-2 \choose |T_i|-1} = O({n+2^{|T_i|}-2 \choose 2^{|T_i|}-2})$.
\end{proof}

\end{document}